\newtheorem{dfn}{Definition}
\newtheorem{prop}{Proposition}
\newtheorem{cor}{Corollary}
\let\leq\leqslant
\let\geq\geqslant
\def\@fnsymbol#1{\ensuremath{\ifcase#1\or *\or (a)\or (b)\or (c)\or (d)\or \S \or
   \mathsection\or \mathparagraph\or \|\or **\or \dagger \or \ddagger \or \dagger\dagger
   \or \ddagger\ddagger \else\@ctrerr\fi}}
\title{\textbf{Playing with parameters: structural parameterization in graphs}\thanks{This work was supported the French Agency for Research under the DEFIS program TODO, ANR-09-EMER-010}}
\author{N. Bourgeois\thanks{Universit\'e Paris~1, SAMM, \texttt{nbourgeo@phare.normalesup.org}} \and 
K.K. Dabrowski\thanks{ESSEC Business School, \texttt{\{dabrowski,demange\}@essec.edu}} \and 
M. Demange\footnotemark[3] \and
V.Th.~Paschos\thanks{PSL Research University, Universit\'e Paris-Dauphine, LAMSADE CNRS, UMR 724 and Institut Universitaire de France, \texttt{paschos@lamsade.dauphine.fr}}
}
\begin{document}

\maketitle

\begin{abstract}

When considering a graph problem from a parameterized point of view, the
parameter chosen is often the size of an optimal solution of this problem (the
``standard'' parameter). A natural subject for investigation is what happens when we
parameterize such a problem by various other parameters, some of which may be
the values of optimal solutions to different problems. Such research is known
as {\em parameterized ecology}.
In this paper, we
investigate seven natural vertex problems, along with their respective
parameters:~$\alpha$ (the size of a maximum independent set),~$\tau$ (the size
of a minimum vertex cover),~$\omega$  (the size of a maximum clique),~$\chi$
(the chromatic number),~$\gamma$ (the size of a minimum dominating set),~$i$
(the size of a minimum independent dominating set) and~$\nu$ (the size of a
minimum feedback vertex set). We study the parameterized complexity of each of
these problems with respect to the standard parameter of the others. 

\end{abstract}

\section{Introduction: structural parameterization}

Parameterized complexity has been widely studied over the past few years. The
main motivation for this area is to study the tractability of a problem with
respect to the size of some of its parameters besides the size of the instance.
In particular, some NP-hard problems may become more tractable for instances
with small parameter value (see the books~\cite{dowfel,FG06,niedermeier06} for
more details about parameterized complexity).  From a practical point of view
this may be interesting when the considered parameter has a strong dependence
on the underlying model, in which case instances with low parameter value may be
relevant. Much of the time in the literature, the main parameter used
for an optimization problem was the optimal value itself.  This causes two
limitations: first, it sometimes becomes difficult to compare the parameterized
complexity of two different problems, each of them dealing with a specific
parameter. Second, it may happen that instances with small parameter value but
large optimal value are relevant and in this case an approach with this
specific parameter may allow us to solve such instances efficiently.

Another parameter that has seen much interest is that of treewidth. However,
it is useful to consider problems with respect to any kind of parameter and such study is normally referred to by the term {\em structural parameterization}.
Considering the parameterized
complexity of a problem with respect to several parameters gives more
information and deeper insight into the real tractability of the problem. It also
provides a more stable framework for comparing the tractability of different
problems. 
If we consider standard parameters for a series of problems, it could be
relevant to study parameterized complexity for each of the problems considered
with respect to the standard parameters of the others. The first such
systematic study of ``parameterized ecology'' appeared in~\cite{FLMMRS09}. They
describe a number of methods to aid in such classifications. Follow-up work to
this paper can be found in the survey~\cite{Fellows2013541} and further
advances in this direction appear in~\cite{KN12}.  Various papers have studied
problems parameterized by non-standard parameters (see e.g.~\cite{JB13,BJK13,MLPPS12}). Many more such results can be found in~\cite{Jansenthesis}.

The problems focussed on in~\cite{FLMMRS09,Fellows2013541} were vertex
cover, dominating set, treewidth, bandwidth, genus and maximum leaf number\footnote{For a
connected graph, the maximum leaf number is the maximum number of leaves of a
spanning subtree of the graph.} The first two of these problems are in some
sense natural, in that they are problems that often need to be solved in
real-world applications. While the other four problems are interesting from a
theoretical point of view as parameters, they are not, in and of themselves, of
much practical use as problems.

We study seven very well-known combinatorial optimization problems on graphs
that are very relevant for many practical applications. Each of these problems
has a standard parameter associated with it.  Our aim here is to study the
natural question of the tractability of each problem under each considered
parameter (see Table~\ref{tab:results}). However, this  could also be done for
any other combinatorial problem and any parameter. 

When handling optimization problems, three different versions of a problem can
be considered: 
\begin{enumerate}
\item computing, for any instance, an optimal solution (the {\em constructive} case);
\item calculating the optimal value (the {\em non-constructive} case); 
\item solving the decision version of a problem.
\end{enumerate}
We
include this distinction in our general framework (see Section~\ref{sec:cnc}),
defining a general notion of FPT reducibility and proving the equivalence
between these different versions for a wide range of problems including our
seven basic graph problems. Then, in Section~\ref{sec:proof}, we prove the
results of Table~\ref{tab:results}, considering first tractability and then
intractability results. Some of the proofs in this section rely on results
given in Section~\ref{sec:cnc}. 

\subsection{Notation}

In our investigation of parameterized ecology, we have selected seven basic
vertex parameters of graphs on which to focus our study:
\begin{enumerate}
\item[$\alpha$:] the size of a maximum independent set;
\item[$\tau$:] the size of a minimum vertex cover;
\item[$\omega$:] the size of a maximum clique;
\item[$\chi$:] the chromatic number;
\item[$\gamma$:] the size of a minimum dominating set;
\item[$i$:] the size of a minimum independent dominating set;
\item[$\nu$:] the size of a minimum feedback vertex set.
\end{enumerate}
We also write $n$ for the number of vertices and $\Delta$ for the maximum
degree. All these values are integer graph parameters and we will use $p$ to
refer to any general graph parameter, i.e. $p$ is any computable function that
takes a graph as input and outputs an integer value. The parameters we consider
all take only non-negative integral values. For a graph $G$, $p(G)$
denotes the value of the related parameter for $G$. 

For any such parameter~$p$, there is an associated combinatorial
problem~$\Pi^p$, which is that of computing the parameter for a given graph.
Here, the considered parameter is in fact the value of the optimization
problem~$\Pi^p$ and in Section~\ref{sec:cnc} we will distinguish between
computing the parameter itself and computing a corresponding optimal solution.
A graph problem asking for a value, solution or answer to a decision question
is said to be {\em fixed-parameter tractable} (FPT) with respect to a
parameter~$p$ (or simply FPT($p$)) if there is an algorithm that solves the
problem in~$O(g(p(G))P(n(G))$ time on input graph~$G$, where~$g$ is a
computable function and~$P$ is a polynomial function. Without loss of
generality  we assume that $g$ is non-decreasing (otherwise replace $g(p)$ by
${\max\{g(p') | p' \leq p\}}$). When no ambiguity occurs, we write
$O(g(p)P(n))=O^*(g(p))$. We say that such an algorithm is an FPT algorithm and
runs in FPT time. We will sometimes write~$(\Pi,p)$ to refer to the
problem~$\Pi$ parameterized by~$p$.

For a graph~$G$ with a vertex~$x$, the set~$N(x)$ denotes the {neighbourhood}
of~$x$, i.e. the set of vertices adjacent to~$x$. We set~$N[x] = N(x) \cup
\{x\}$, the {closed neighbourhood} of~$x$. If~$D$ is a set of vertices,~$N[D]$
denotes the union of the closed neighbourhoods of vertices in~$D$.  Finally,~$V(G)$ denotes the vertex set of~$G$ and for $V'\subset V(G)$, $G[V']$ denotes the sub-graph of~$G$ induced by~$V'$.
%FIXME: This is a weird usage of the word tree. Do we actually use it anywhere?
%A tree $T$ of $G$ is an induced sub-graph of $G$ that is a tree.

\subsection{Our results}

%FIXME: TO BE COMPLETED

To date, much research has focused on the complexity of a problem when
parameterized by the solution size, so we can already fill the diagonal line of
Table~\ref{tab:results}.  Some problems have been shown to be fixed
parameter-tractable:~$\Pi^\tau$ and~$\Pi^\alpha$ are FPT($\tau$)~\cite{taufpt},
and~$\Pi^\nu$ is FPT($\nu$)~\cite{nufpt}.

On the other hand,~$\Pi^\omega$ (resp.~$\Pi^\alpha$) is a classic example of a
problem which is W[1]($\omega$)-complete (resp.
W[1]($\alpha$)-complete)~\cite{omegaw1}, while~$\Pi^\gamma$ and~$\Pi^i$ are
W[2]($\gamma$)-complete and W[2]($i$)-complete, respectively~\cite{gammaw2}.
Problem~$\Pi^\chi$ is~$\notin$XP($\chi$) since chromatic number remains NP-hard
when the optimum is 3~\cite{Lovasz73,Stockmeyer73}. Here, we take~$\notin$XP to
mean that the problem is not in the class~XP (assuming $\mathrm{P} \neq
\mathrm{NP}$). Note that any vertex cover consists of vertices whose removal
leaves an independent set (and vice versa). Therefore, any minimum vertex cover
consists of precisely those vertices that are not in some maximum independent
set. Therefore, any results for the complexity of~$\Pi^\alpha$ will also apply
to the complexity of~$\Pi^\tau$ and vice versa (as suggested in
Table~\ref{tab:results}).

%\subsection{Results}

The overall results are summarized in Table~\ref{tab:results} where, for a
complexity class~C, C-c (resp. C-h) means C-complete (resp. C-hard).

\begin{table}[h]
\begin{center}
\begin{tabular}{c|cccccc}
         & $\Pi^\omega$ & $\Pi^\chi$ & $\Pi^\gamma$ & $\Pi^i$     & $\Pi^\nu$  & $\Pi^\alpha$/$\Pi^\tau$ \\
\hline
$\omega$ & W[1]-c       & $\notin$XP & $\notin$XP   & $\notin$XP  & $\notin$XP & $\notin$XP \\
$\chi$   & W[1]-h       & $\notin$XP & $\notin$XP   & $\notin$XP  & $\notin$XP & $\notin$XP \\
$\gamma$ & $\notin$XP   & $\notin$XP & W[2]-c       & $\notin$XP  & $\notin$XP & $\notin$XP \\
$i$      & $\notin$XP   & $\notin$XP & W[2]-h       & W[2]-c      & $\notin$XP & $\notin$XP \\
$\nu$    & FPT          & FPT        & FPT          & FPT         & FPT        & FPT        \\
$\tau$   & FPT          & FPT        & FPT          & FPT         & FPT        & FPT        \\
$\alpha$ & $\notin$XP   & $\notin$XP & W[2]-h       & W[2]-h      & W[1]-h     & W[1]-c     \\
\end{tabular}
\end{center}
\caption{A summary of our results. The columns represent graph problems. The rows represent parameters.}\label{tab:results}
\end{table}

\section{Constructive vs.\ non-constructive computation}\label{sec:cnc}

Before going into the proofs of the main results reported in
Table~\ref{tab:results}, we first revisit, in the context of the present
framework, the question of equivalence between computing an optimal solution of
an optimization problem, computing its optimal value and the related decision
problem. In particular, we will make use of this equivalence in the proof of Claim~\ref{prop:gammaialpha} in Proposition~\ref{prop:alphaalphaprop}.
\begin{dfn}\label{eq:prog-math}
Any instance of an optimization problem $\Pi$ can be expressed as a
mathematical program of the form below with objective function $f$ and
constraint set $\mathcal{C}$:
\begin{equation}
\left\{\begin{array}{l}
\max {\rm \ or\ }\min  f(x)\\
x\in \mathcal{C}
\end{array}\right.
\end{equation}
\end{dfn}
When dealing with such optimization problems, several frameworks can be
considered, leading to three different versions of the problem. The {\em
constructive} version $\Pi_c$ asks us to compute an {\em optimal solution} for
the input instance, while the {\em non-constructive} (or {\em value}) version
$\Pi_v$ only asks us to compute an optimal value.  Finally, the decision
version $\Pi_d$ asks us to decide, for any value $k$, whether there is some
feasible solution $x\in\mathcal{C}$ satisfying $f(x)\geq k$ (if $\Pi$ is a
maximization problem) or $f(x)\leq k$ (if $\Pi$ is a minimization problem).
When needed, we will denote any problem with its version $\Pi_t$, with
$t\in\{c,v,d\}$; when this subscript is not specified, we will consider the
constructive version $t=c$.

This distinction, along with the relative complexity of the different versions
has been considered several times in the literature, in particular for the
classical complexity framework~\cite{paz-moran}  and for the framework of
polynomial approximation~\cite{demange-lorenzo}. This same distinction can be
considered in the frame of parameterized complexity.  Most often, negative
results are stated for the value version, while positive results are stated for
the constructive version. Note that the value version is not more difficult
than the constructive one as long as the objective function can be computed in
reasonable time. More precisely if $f$ can be computed in polynomial
(resp. FPT) time then any polynomial (resp. FPT) algorithm for the constructive
version can immediately be turned into a polynomial (resp.  FPT) algorithm for
the value version. The same holds between the value version and the decision
version, the former being at least as difficult as the latter if $f$ can be
efficiently computed.

In this paper we only consider problems for which $f$ can be computed in
polynomial time and consequently the constructive version is at least as hard
as the non-constructive one, which itself is also as hard as the decision
version. An interesting question is whether or not these versions are
equivalent in complexity.  To study the relative complexity of problems and, in
particular, of the different versions of a problem, the notion of {\em
reduction} is useful. Many kinds of reductions (mainly polynomial ones) have
been introduced in the literature, allowing us to compare tractability of
different problems and even between the different versions of a given problem.
The notion of {\em FPT reduction}, able to transfer FPT algorithms from one
problem to another one, has also been introduced (see
e.g.~\cite{dowfel,FG06,niedermeier06}). Here, we enhance the usual definition of Turing fpt reductions it in order
to
integrate the possibility of considering problems that may not necessarily be decision problems, and are specifiedand are specified  with respect to any kind of parameter.
\begin{dfn}\label{def:reduction}
Let $(\Pi^1_{t_1},p_1), (\Pi^2_{t_2},p_2)$ be two optimization problems
parameterized by $p_1$ and $p_2$ respectively, with $t_1,t_2\in\{c,v,d\}$. An
FPT-reduction from~$(\Pi^1_{t_1},p_1)$ to~$(\Pi^2_{t_2},p_2)$ is an algorithm
solving~$\mathcal{A}_1$ in FPT time with respect to parameter~$p_1$ using an
Oracle~$\mathcal{O}_2$ for~$\Pi^2_{t_2}$ such that  for any instance~$I_1$
of~$\Pi^1_{t_1}$, any call on~$\mathcal{O}_2$ is made on instances~$I'_2$
for~$\Pi^2_{t_2}$ whose size is polynomially bounded with respect to~$I_1$ and
which satisfy $p_2(I'_2)\leq h(p_1(I_1))$ for some function~$h$. We then say
that~$(\Pi^1_{t_1},p_1)$ FPT-reduces to~$(\Pi^2_{t_2},p_2)$ and denote this by
$(\Pi^1_{t_1},p_1)\leq_{FPT}(\Pi^2_{t_2},p_2)$. If the reduction is polynomial,
we denote it by $\Pi^1_{t_1}\leq_{P}\Pi^2_{t_2}$ (no parameter needs to be
specified).
\end{dfn}
Note that, since $\mathcal {A}_1$ is FPT, the number of calls to
Oracle~$\mathcal{O}_2$ is bounded by an FPT function with respect to
parameter~$p_1$ and consequently if~$\mathcal{O}_2$ is an FPT algorithm with
respect to $p_2$, then the conditions on $p_2(I'_2)$ and on the size of $I'_2$
ensures that  the reduction leads to an FPT algorithm for $\Pi^1_{t_1}$ with
respect to parameter $p_1$. In other words, such a reduction is able to
transform an FPT algorithm for $\Pi^2_{t_2}$ with respect to parameter $p_2$
into an FPT algorithm for $\Pi^1_{t_1}$ with respect to parameter $p_1$.
%Considering different kind of outputs for $\mathcal{A}_1$ and $\cal O}_2$ it
%is straightforward to take into account for $\Pi_1$ and $\Pi_2$ any of the
%three different versions constructive, non-constructive and decision.
%FIXME: Define ``FPT function''

Note that the decision version and the non-constructive version are equivalent
for a very large class of problems, as stated by the following proposition:
\begin{prop}\label{prop:prop1}
Let $\Pi$ be an optimization problem, an instance of which is defined as in
Definition~\ref{eq:prog-math}, with parameter $p$ satisfying:
\begin{enumerate}
\item $f$ has integral values;
\item the output value associated to some feasible input can be found in polynomial time;
\item there is a polynomial function $P$ and a function $\ell$ such that $\forall
x,y\in \mathcal{C}, |f(x)-f(y)|\leq 2^{\ell(p)P(n)}$.
\end{enumerate}
Then $(\Pi_v,p)\leq_{FPT}(\Pi_d,p)$.
\end{prop}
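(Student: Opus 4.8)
The plan is to run a binary search for the optimal value, using the decision oracle, over a search window whose width is controlled by hypothesis~3. Fix an instance $I$ of $\Pi$ on $n$ vertices (treating $\Pi$ as a minimization problem; the maximization case is symmetric), and let $\mathrm{opt}$ be its optimal value. By hypothesis~2, in polynomial time I can compute the value $v_0 = f(x_0)$ of some feasible solution $x_0 \in \mathcal{C}$. By hypothesis~3, \emph{every} feasible value, and in particular $\mathrm{opt}$, lies in the interval $[\,v_0 - M,\ v_0 + M\,]$ with $M = 2^{\ell(p)P(n)}$; by hypothesis~1 (integrality of $f$) this interval contains at most $2M+1$ candidate integers, so it can be searched by bisection.

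Concretely, I would maintain a sub-interval $[a,b]$, initially $[v_0-M, v_0+M]$, and at each step query the decision oracle for $\Pi_d$ on the pair $(I, k)$ with $k = \lfloor (a+b)/2 \rfloor$, asking whether there is a feasible $x$ with $f(x) \le k$; on a ``yes'' I recurse on $[a,k]$, on a ``no'' on $[k+1,b]$. After $O(\log M) = O(\ell(p)P(n))$ steps the interval collapses to a single integer, which must equal $\mathrm{opt}$, and I output it. Since the problems considered have $f$ polynomial-time computable, this in particular subsumes obtaining $\Pi_v$ from $\Pi_c$, but only the reduction to $\Pi_d$ is needed here.

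The remaining work is to check the conditions of Definition~\ref{def:reduction}. Each oracle call is on an instance made of $G$ (unchanged) together with a threshold $k$ whose magnitude is at most $|v_0| + M$; as $v_0$ is computed in polynomial time it has polynomially many bits, and $\log M = \ell(p)P(n)$, so the queried instances have the required size bound, and their parameter is exactly $p(I)$ (so $h$ may be taken to be the identity). The number of oracle calls is $O(\ell(p)P(n))$ and the per-step work outside the oracle is polynomial, so the procedure is FPT in $p$; hence $(\Pi_v,p)\leq_{FPT}(\Pi_d,p)$. The only point requiring care is hypothesis~3: it is precisely what guarantees a search window of width singly exponential in $\ell(p)P(n)$, so that an FPT-bounded number $O(\ell(p)P(n))$ of bisection queries suffices and the numbers handled stay of FPT-bounded length; without such an a priori bound on the spread of feasible values, a binary search anchored at $v_0$ would not run in FPT time. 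Everything else is routine bookkeeping.
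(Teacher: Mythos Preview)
Your proof uses the same binary-search idea as the paper, and the count of oracle calls is the same. There is, however, one gap: you initialise the search interval to $[v_0-M,\,v_0+M]$ with $M=2^{\ell(p)P(n)}$, which requires the algorithm to \emph{compute} $M$, hence to know both the function $\ell$ and the parameter value $p(I)$. In this paper's setting the parameter is a structural quantity of the graph (such as $\alpha$ or $\gamma$) that may itself be NP-hard to evaluate, so $M$ is not available to the reduction. The paper explicitly flags this point (``$\ell(p)$ may not be explicitly known'') and repairs it by preceding the binary search with a doubling phase: starting from a feasible value $\lambda$, it asks the oracle whether the optimum exceeds $\lambda+2^k$ for $k=1,2,\ldots$ until the answer is NO, which yields a bound $K\le 2^{\ell(p)P(n)+1}$ without ever computing $\ell(p)$ or $p(I)$; the binary search is then carried out on $[\lambda,\lambda+K]$. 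The total number of oracle calls remains $O(\ell(p)P(n))$, but the algorithm is now parameter-oblivious. With this adjustment your argument is complete.
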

In particular, this holds for integral non-negative objective functions
boun\-ded by $2^{\ell(p)P(n)}$.
\begin{proof}
The reduction is easily done by binary search. Without loss of generality we
assume that~$\Pi$ is a maximization problem (the minimization case is similar).
We start by finding a feasible output value. Next, we try to find a $K\leq
2^{\ell(p)P(n)+1}$ such that the optimal value lies in $[\lambda, \lambda +K]$.
Note that~$\ell(p)$ may not be explicitly known. We ask the Oracle~$\mathcal{O}$
for the problem~$\Pi_d$, whether $\exists x\in\mathcal{C}, f(x)\geq
\lambda+2^k$ for successive values of $k\geq 1$, or not.  Let~$K$ be the first
value for which the answer is NO. We know that $K\leq 2^{\ell(p)P(n)+1}$ by
hypothesis. We then find the optimal value by binary search in the interval
$[\lambda, \lambda +K]$, using Oracle $\mathcal{O}$.

This process is FPT with respect to parameter $p$ since the number of calls to
Oracle~$\mathcal{O}$ is at most $2\ell(p)P(n)+4$, each time for the same
$\Pi$-instance.
\end{proof}
For any optimization problem~$\Pi$ and parameter~$p$, we let $\Pi|_{p-{\rm
bounded}}$ denote the sub-problem of~$\Pi$ restricted to instances where $p\leq
K$ for a fixed bound~$K$. In~\cite{paz-moran} a general process was proposed to
reduce the constructive version~$\Pi_c$ of an optimization problem~$\Pi$ to its
non-constructive version~$\Pi_v$. The main idea is to transform~$\Pi$
into~$\Pi'$ by transforming the objective function~$f$ into a one-to-one
function~$f'$ such that $\forall x,y\in \mathcal{C}, f(x)\leq f(y)\Rightarrow
f'(x)\leq f'(y)$.

Moreover, one needs to suppose that the inverse function $f'^{-1}$ can be
computed in polynomial time. In particular, this holds when $f$ has integral
values, there is a polynomial-time computable bound $B$ such that
$|\mathcal{C}|\leq B$ and there is a total order on $\mathcal{C}$ such that the
related rank-function $r$ as well as its inverse function $r^{-1}$ are both
polynomially computable. Typically, for $\mathcal{C}\subset \{0,1\}^{P(n)}$,
where $P$ is a polynomial function, the lexicographic order can be computed
and inverted in polynomial time.

We can then take $f'(x)=(B+1)f(x)+r(x)$.  Note that if such a function~$f'$
does exist and if $\Pi_d\in NP$, then $\Pi'_d\in NP$.
\begin{prop}\label{pro:eq-paz-mor} Suppose $\Pi_d\in NP$, $(\Pi|_{p-{\rm
bounded}})_d$ is NP-complete for a parameter~$p$ and there is a polynomial
function~$P$ such that any instance of~$\Pi$ satisfies $\mathcal{C}\subseteq
\{0,1\}^{P(n)}$. Then $(\Pi_c,p)\leq_{FPT} (\Pi_v,p)$.
\end{prop}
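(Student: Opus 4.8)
The plan is to run the Paz--Moran scheme sketched just before the statement: replace $\Pi$ by the ``tie-broken'' problem $\Pi'$ with objective $f'(x)=(B+1)f(x)+r(x)$, and then connect the constructive version of $\Pi$ to its value version through a short chain of FPT-reductions that passes through $\Pi'$. Since every instance of $\Pi$ has $\mathcal{C}\subseteq\{0,1\}^{P(n)}$, I would take $r(x)$ to be the integer whose $P(n)$-bit expansion is the string $x$ (so $r$ is one-to-one on $\{0,1\}^{P(n)}$ with a polynomial-time computable inverse $r^{-1}$) and $B=2^{P(n)}-1\geq\max_x r(x)$. Then $f'$ is one-to-one on $\mathcal{C}$ and strictly refines the order given by $f$ (if $f(x)<f(y)$ then $f'(x)<f'(y)$), so any optimizer of $f'$ is an optimizer of the original instance; moreover $f'$ is integer-valued and, since $f$ is polynomial-time computable and $\mathcal{C}\subseteq\{0,1\}^{P(n)}$, $f'$ has polynomially bounded bit-length, hence takes its values in an interval $[-2^{R(n)},2^{R(n)}]$ for a polynomial $R$ that we can compute. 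Finally, as already noted just before the statement, $\Pi_d\in\mathrm{NP}$ implies $\Pi'_d\in\mathrm{NP}$.

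First I would establish $\Pi_c\leq_{P}\Pi'_d$. Given an instance $I$ of $\Pi_c$, I would query the $\Pi'_d$-oracle once to detect whether $\mathcal{C}=\emptyset$ (report accordingly in that degenerate case); otherwise I would binary-search in $[-2^{R(n)},2^{R(n)}]$ for the optimal value $v^*$ of $f'$ on $I$, calling the $\Pi'_d$-oracle on $(I,m)$ for the successive thresholds $m$. This costs $O(R(n))$ oracle calls, all on the instance $I$ itself, hence of parameter $p(I)$ and of size polynomial in $|I|$. I would then output $x^*=f'^{-1}(v^*)$, computable in polynomial time via $\lfloor v^*/(B+1)\rfloor$, $v^*\bmod(B+1)$ and $r^{-1}$; because $f'$ is one-to-one and order-refining, $x^*\in\mathcal{C}$ and $x^*$ is an optimal solution of $I$. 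Since every oracle call is on $I$ itself, this is in particular an FPT-reduction $(\Pi_c,p)\leq_{FPT}(\Pi'_d,p)$; it is essentially the argument of Proposition~\ref{prop:prop1} applied to $\Pi'$, followed by the inversion of $f'$.

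Next I would establish $(\Pi'_d,p)\leq_{FPT}(\Pi_v,p)$. Let $K_0$ be a fixed bound for which $(\Pi|_{p\text{-bounded}})_d$ is $\mathrm{NP}$-complete. Since $\Pi'_d\in\mathrm{NP}$, there is a polynomial-time many-one reduction $g$ from $\Pi'_d$ to $(\Pi|_{p\leq K_0})_d$; after replacing any ill-formed value of $g$ by a fixed trivial ``no''-instance, $g(J)$ is, for every $J$, an instance of $\Pi_d$ of size polynomial in $|J|$ and of parameter at most the constant $K_0$. Composing $g$ with the trivial reduction $(\Pi_d,p)\leq_{FPT}(\Pi_v,p)$ (to answer a threshold question on an instance $I'$, query $\Pi_v$ on $I'$ and compare the returned value to the threshold) yields an FPT-reduction from $(\Pi'_d,p)$ to $(\Pi_v,p)$, with parameter-blow-up the constant function $K_0$.

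Finally I would compose. FPT-reductions compose — the parameter-blow-up functions compose to a computable function, polynomial instance sizes compose to polynomial ones, and since each reduction makes an FPT-bounded number of oracle calls the composite does too — so the two previous steps give $(\Pi_c,p)\leq_{FPT}(\Pi_v,p)$, as required. I expect the only delicate point (really the heart of the argument) to be controlling the parameter along the chain: the instances $(I,m)$ handed to $\Pi'_d$ in the first step keep $p$ equal to $p(I)$, but an \emph{arbitrary} $\Pi'_d$-instance may have unbounded $p$, so a generic reduction of $\Pi'_d$ to $\Pi_d$ could blow the parameter up; it is precisely the hypothesis that $(\Pi|_{p\text{-bounded}})_d$ is already $\mathrm{NP}$-complete for some fixed $K_0$ that lets us route that reduction through $\Pi_d$-instances whose parameter is the absolute constant $K_0$, trivially bounded by a function of $p(I)$ in the sense of Definition~\ref{def:reduction}. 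Everything else is routine bookkeeping.
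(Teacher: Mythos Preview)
Your proof is correct and follows essentially the same approach as the paper: both build the chain $\Pi_c \to \Pi'_c \to \Pi'_v \to \Pi'_d \to (\Pi|_{p\text{-bounded}})_d \to \Pi_v$ via the Paz--Moran tie-breaking and then exploit the NP-completeness of the $p$-bounded subproblem to route the final reduction through instances of constant parameter. You package the first three links into a single binary-search-and-invert step and are considerably more explicit than the paper about parameter control along the chain (correctly identifying it as the only nontrivial point), but the underlying argument is the same.
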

\begin{proof}
%FIXME: KD: I find this proof hard to understand
We have $\Pi_c\leq_P\Pi'_c$ since both problems have exactly the same feasible
solutions and any optimal solution to $\Pi'_c$ is also optimal for $\Pi_c$.
$\Pi'_c\leq_P\Pi'_v$ by definition of $\Pi'$ (see above).

Taking the bound $B=2^{P(n)}$ we get $\Pi'_v\leq_P\Pi'_d$ and since $\Pi'_d$ is
in NP and $(\Pi|_{p-{\rm bounded}})_d$ is NP-complete, $\Pi'_d\leq_P
(\Pi|_{p-{\rm bounded}})_d$.  Given an instance of~$(\Pi_c,p)$, the reduction
(see Definition~\ref{def:reduction}) simply computes an equivalent instance of
$((\Pi|_{p-{\rm bounded}})_d,p)$, where $p\leq K$, for a constant~$K$,~$h$ is
the constant function equal to~$K$ and~${\cal O}$ is an oracle for
$(\Pi|_{p-{\rm bounded}})_d$. Note that any FPT algorithm for~$(\Pi,p)$ leads
to a polynomial-time algorithm for $(\Pi|_{p-{\rm bounded}})_d$, and
consequently such a reduction transforms an FPT algorithm into a polynomial
time one. 
\end{proof}
\begin{prop}\label{pro:cons-paz-mor}
We have $(\Pi_c,p)\leq_{FPT} (\Pi_v,p)$ for:
\begin{enumerate}
\item\label{it:paromegachi} $\Pi^\alpha, \Pi^\chi, \Pi^\gamma, \Pi^i, \Pi^\nu$
and $p\in\{\omega, \chi\}$;
\item\label{it:omegachi-alpha} $\Pi^\omega, \Pi^\chi$ and $p=\alpha$;
\item\label{it:i-gamma} $\Pi^i$ and $p=\gamma$;
\item\label{it:alphaomegachinu-gammai} $\Pi^\alpha, \Pi^\tau, \Pi^\omega,
\Pi^\chi, \Pi^\nu$ and $p\in\{\gamma,i\}$.
\end{enumerate}
\end{prop}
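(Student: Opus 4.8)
The plan is to obtain every item by applying Proposition~\ref{pro:eq-paz-mor}. For each problem $\Pi$ in the list, the decision version $\Pi_d$ is plainly in NP, and every instance has its feasible set satisfying $\mathcal{C}\subseteq\{0,1\}^{P(n)}$, since a feasible solution is nothing but the indicator vector of a vertex subset; so two of the three hypotheses of Proposition~\ref{pro:eq-paz-mor} hold automatically. The one point to check, case by case, is that $(\Pi|_{p\text{-bounded}})_d$ is NP-complete, i.e.\ that computing the parameter of $\Pi$ remains NP-hard on the class of graphs on which $p$ is bounded by an absolute constant $K$. In each case I would exhibit such a $K$ together with a polynomial reduction of a classical NP-complete problem into that restricted class.

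For item~\ref{it:paromegachi} I would restrict to planar graphs: every planar graph satisfies $\omega\leq 4$ and $\chi\leq 4$, so it suffices to recall that \is, \textsc{graph colouring}, \ds, \textsc{independent dominating set} and \textsc{feedback vertex set} are all NP-complete on planar graphs (this also covers \vc, and it handles $p=\omega$ and $p=\chi$ simultaneously, with $K=4$). For item~\ref{it:omegachi-alpha} I would pass to complements. If $H$ is triangle-free then $\bar H$ satisfies $\alpha(\bar H)=\omega(H)\leq 2$ and $\omega(\bar H)=\alpha(H)$; since \is\ is NP-complete on triangle-free graphs, $(\Pi^\omega|_{\alpha\leq 2})_d$ is NP-complete. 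For $\Pi^\chi$, take $H$ to be $K_4$-free, so that $\alpha(\bar H)=\omega(H)\leq 3$; then $\chi(\bar H)$ equals the minimum number of cliques (necessarily triangles, edges and single vertices) needed to cover $V(H)$, and deciding this value is NP-hard because \textsc{partition into triangles} is NP-complete on $K_4$-free graphs, so $(\Pi^\chi|_{\alpha\leq 3})_d$ is NP-complete.

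For item~\ref{it:i-gamma} I would build a dedicated gadget. Given a graph $H_0$, let $G$ consist of two disjoint copies $H_1,H_2$ of $H_0$ together with two new vertices $a,b$ such that $a\sim b$, $a$ is adjacent to every vertex of $V(H_1)$, and $b$ is adjacent to every vertex of $V(H_2)$. Then $\{a,b\}$ dominates $G$ while no single vertex does, so $\gamma(G)=2$; and a short case analysis on whether a maximal independent set of $G$ contains $a$, contains $b$, or contains neither shows that $i(G)=1+i(H_0)$. Since $\Pi^i$ is NP-hard in general, it follows that $(\Pi^i|_{\gamma\leq 2})_d$ is NP-complete. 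For item~\ref{it:alphaomegachinu-gammai} it suffices to treat $p=i$, because $\gamma\leq i$ implies that every graph with $i\leq K$ also has $\gamma\leq K$. Adding a universal vertex $u$ to a graph $H$ gives $i(H+u)=1$, while $\alpha(H+u)=\alpha(H)$, $\tau(H+u)=\tau(H)+1$, $\omega(H+u)=\omega(H)+1$ and $\chi(H+u)=\chi(H)+1$; hence the classical NP-completeness of \is, \vc, \textsc{max clique} and \textsc{graph colouring} immediately gives the NP-completeness of the $i$-bounded restrictions of $\Pi^\alpha,\Pi^\tau,\Pi^\omega$ and $\Pi^\chi$, with $K=1$. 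The feedback vertex set case is more delicate, since $\nu(H+u)=\min(\nu(H)+1,\tau(H))$; here I would first replace $H$ by the graph $H'$ obtained by attaching one pendant leaf to each vertex of $H$, so that $\nu(H')=\nu(H)$ but $\tau(H')=|V(H)|$, which (using $\nu(H)\leq |V(H)|-2$) forces $\nu(H'+u)=\nu(H)+1$ while still $i(H'+u)=1$; this yields NP-completeness of $(\Pi^\nu|_{i\leq 1})_d$.

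The steps I expect to need the most care are the chromatic-number case of item~\ref{it:omegachi-alpha} --- where one must ensure the hard instances of \textsc{partition into triangles}, equivalently of minimum clique cover restricted to graphs with $\omega\leq 3$, can be taken $K_4$-free, so that their complements genuinely have bounded independence number --- and the feedback-vertex-set case of item~\ref{it:alphaomegachinu-gammai}, where the universal-vertex reduction no longer merely shifts the parameter and the pendant padding is required to pin down the minimum. The remaining cases are routine once the appropriate classical NP-hardness results on restricted graph classes are in hand, and each then plugs directly into Proposition~\ref{pro:eq-paz-mor}.
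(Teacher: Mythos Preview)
Your plan is correct and follows the same overall scheme as the paper: verify the hypotheses of Proposition~\ref{pro:eq-paz-mor} by exhibiting, for each pair $(\Pi,p)$, NP-hardness of $\Pi_d$ on a class where $p$ is bounded. Items~\ref{it:i-gamma} and the non-$\nu$ cases of item~\ref{it:alphaomegachinu-gammai} match the paper almost verbatim (same two-copy gadget, same universal-vertex trick). In item~\ref{it:paromegachi} you use planarity where the paper uses bounded degree ($\Delta\leq 3$ gives $\omega\leq\chi\leq 4$); both routes work and call on equally standard hardness results. In item~\ref{it:omegachi-alpha} for $\Pi^\chi$ you invoke \textsc{partition into triangles} on $K_4$-free graphs, whereas the paper simply quotes NP-completeness of \textsc{minimum clique partition} on graphs of maximum degree~$3$ (hence $\omega\leq 4$), which is cleaner and avoids the care you flag.

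The one place where you do noticeably more work than necessary is the $\Pi^\nu$ case of item~\ref{it:alphaomegachinu-gammai}. Your identity $\nu(H+u)=\min(\nu(H)+1,\tau(H))$ is correct, but the minimum is always attained by $\nu(H)+1$ whenever $H$ has at least one edge: if $C$ is a minimum vertex cover of $H$ and $w\in C$, then $H\setminus(C\setminus\{w\})$ is a star and hence acyclic, so $\nu(H)\leq\tau(H)-1$. The paper exploits exactly this observation to conclude $\nu(H+u)=\nu(H)+1$ directly, so your pendant-leaf padding, while harmless, is superfluous.
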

\begin{proof} 
All these results are the consequence of Proposition~\ref{pro:eq-paz-mor}.  For
all these problems  $\mathcal{C}\subseteq \{0,1\}^{P(n)}$ for some polynomial
function $P$ and the decision version is known to be in NP. So, we need to show
that in each case the problem $(\Pi|_{p-{\rm bounded}})_d$ is NP-complete.

\ref{it:paromegachi}.  
This follows from the inequality $\Delta(G)+1 \geq \chi(G) \geq \omega(G)$. We
simply recall that all these problems remain NP-complete on graphs of degree
3~\cite{gj}, and that 3-colouring is NP-complete~\cite{gj}.

\ref{it:omegachi-alpha}.  
Using the previous results in $\overline{G}$ we get the NP-completeness  of
$\Pi^\omega_d$ in graphs whose maximum independent set is of size 3.  A
colouring in $G$ induces a clique partition in $\overline{G}$. The decision
version of \textsc{minimum clique partition} is NP-complete in graphs of
maximum degree 3~\cite{cliquepart}, and thus in graphs of maximum clique 4.
Thus, $\Pi^\chi_d$ is NP-complete in graphs whose maximum independent set is of
size 4. 

\ref{it:i-gamma}.  
Let $G=(V,E)$ be an arbitrary non-empty graph.  Let~$G_0$ and~$G_1$ be disjoint
copies of~$G$. We form the graph~$G'$ by taking the disjoint union of~$G_0$
and~$G_1$, adding a vertex~$u$ which dominates the vertices of~$G_0$, adding a
vertex~$v$ which dominates~$G_1$ and joining~$u$ to~$v$ with an edge. Then
$\gamma(G') = 2$ (since~$\{u,v\}$ is a dominating set and there is no
dominating vertex).  Consider an independent dominating set~$I$ in~$G'$.  It
cannot contain both~$u$ and~$v$. If it contains neither, then $|I| \geq 2i(G)
\geq i(G)+1$. If it contains exactly one of~$u$ and~$v$ (without loss of
generality~$u$), then it must contain an independent dominating set for~$G_1$,
so $|I| \geq i(G)+1$. But any independent dominating set of~$G_1$, together
with~$u$ is an independent dominating set of~$G'$. Thus $i(G') = i(G) +1$, so
%computing $i(G)$ is NP-hard
$\Pi^i_d$ is NP-complete, even for graphs with $\gamma = 2$.

\ref{it:alphaomegachinu-gammai}. Suppose $G=(V,E)$ is a non-empty graph. Let
$\widetilde{G}$ be the graph obtained from $G$ by adding a new vertex $v$
adjacent to all of $V$. Then we have the following:
\begin{itemize}
\item $\alpha(\widetilde{G}) = \alpha(G)$;
\item $\omega(\widetilde{G})=\omega(G)+1$;
\item $\chi(\widetilde{G})=\chi(G)+1$;
%\item $\gamma(\widetilde{G})=i(\widetilde{G})=1$
\item $\tau(\widetilde{G}) = \tau(G)+1$;
\item $\nu(\widetilde{G})=\nu(G)+1$.
\end{itemize}
The first %four 
three items are obvious. The %fifth 
fourth follows from the fact $\alpha + \tau = n$. For the last one, it is
clear that $\nu(\widetilde{G}) \leq \nu(G)+1$, since adding the dominating
vertex to any feedback vertex set of $G$ yields a feedback vertex set of
$\widetilde{G}$. Consider a minimal feedback vertex set $F$ of $\widetilde{G}$.
Let $v$ be the dominating vertex of $\widetilde{G}$.  We want to show that $|F|
\geq \nu(G)+1$.  If $v \in F$, then $F \setminus \{v\}$ is a feedback vertex
set of $G$, and we are done. Suppose $v \not \in F$. Then $G \setminus F$ must
be a stable set (otherwise two adjacent vertices in $G$, together with $v$
would form a $C_3$ in $\widetilde{G}$). Let $w$ be an arbitrary vertex in $F$.
Then $F \setminus \{w\}$ must be a feedback vertex set in $G$, since $N_{G
\setminus (F \setminus \{w\})}(w)$ is an independent set.  Thus $|F| \geq
\nu(G)+1$ for all minimal feedback vertex sets of $\widetilde{G}$, i.e.
$\nu(\widetilde{G}) \geq \nu(G)+1$.

Note also that $\gamma(\widetilde{G})=i(\widetilde{G})=1$ and consequently the
decision version of~$\Pi^\alpha$, $\Pi^\tau$, $\Pi^\omega$, $\Pi^\chi$
and~$\Pi^\nu$ all remain NP-complete in graphs with $\gamma=i=1$. This
completes the proof.
\end{proof}
Another case where equivalence between the constructive and
non-con\-s\-t\-ru\-c\-ti\-ve cases can be easily stated is the hereditary case.
A property $h: 2^V\Longrightarrow \{0,1\}$ for a finite set $V$ is {\em
hereditary} if $h(U')\geq h(U), \forall U'\subset U\subset V$. We then consider
an {\em hereditary maximization} problem, an instance of which can be written  
\begin{equation}\label{eq:prog-her}
\left\{\begin{array}{l}
\max   f(U)\\
h(U)=1, U \subseteq 2^V
\end{array}\right.
\end{equation}
%where~$f$ satisfies $U'\subset U\Rightarrow f(U')\leq f(U)$. 
The size of this
instance is~$|V|$ and any subset $V'\subset V$ also defines an instance
of~$\Pi$.
\begin{prop}
Let $\Pi$ be an hereditary maximization problem and consider an {\em
non-decreasing} parameter $p$, meaning that  $U'\subset U\Rightarrow p(U')\leq
p(U)$. Then $(\Pi_c,p)\leq_{FPT} (\Pi_v,p)$.
\end{prop}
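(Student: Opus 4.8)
The plan is to prove the statement by a \emph{self-reduction}: given an oracle~$\mathcal{O}$ for the value version~$\Pi_v$, I will, for any instance~$V$ of~$\Pi$, recover an optimal solution using $|V|+1$ oracle calls, each on an induced subinstance. For $V'\subseteq V$ write $\Pi_v(V')$ for the optimal value of the subinstance induced by~$V'$. Two elementary facts will be used throughout. First, by heredity~$\emptyset$ is feasible (unless $\Pi$ is trivially infeasible, in which case there is nothing to do), so every $\Pi_v(V')$ is well defined. Second, $\Pi_v$ is \emph{monotone under taking subsets}: if $V'\subseteq V''$ then every feasible solution of the $V'$-instance (a set $U\subseteq V'$ with $h(U)=1$) is also feasible for the $V''$-instance, so $\Pi_v(V')\leq\Pi_v(V'')$.

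First I would call $\mathcal{O}$ on~$V$ to get $\mathrm{opt}=\Pi_v(V)$. Then, fixing an arbitrary ordering $v_1,\dots,v_n$ of~$V$, I would maintain a ``deleted'' set~$S$, initially empty, and for $i=1,\dots,n$ query $\mathcal{O}$ whether $\Pi_v(V\setminus(S\cup\{v_i\}))=\mathrm{opt}$: if so, add $v_i$ to~$S$; otherwise leave~$S$ unchanged. Write $S_i$ for the value of~$S$ after step~$i$ and let $U^*=V\setminus S_n$ be the final set, which I output.

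The heart of the argument — the step I expect to need the most care — is showing that $U^*$ is genuinely a feasible optimal solution, not merely a vertex set containing one. By a straightforward induction on~$i$, $\Pi_v(V\setminus S_i)=\mathrm{opt}$ is preserved (the base case is $\Pi_v(V)=\mathrm{opt}$, and a deletion step only occurs when the query confirms the value is unchanged); in particular $\Pi_v(U^*)=\mathrm{opt}$, so there is a feasible $U\subseteq U^*$ with $f(U)=\mathrm{opt}$. Conversely, take any $v_j\in U^*$. Since $v_j\notin S_n$, it was not added at step~$j$, so $\Pi_v\bigl((V\setminus S_{j-1})\setminus\{v_j\}\bigr)\neq\mathrm{opt}$, hence (being $\leq\mathrm{opt}$ by monotonicity) strictly less than~$\mathrm{opt}$. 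But $S_{j-1}\subseteq S_n$ gives $U\subseteq U^*\subseteq V\setminus S_{j-1}$, so if $v_j\notin U$ then $U\subseteq(V\setminus S_{j-1})\setminus\{v_j\}$ would witness $\Pi_v\bigl((V\setminus S_{j-1})\setminus\{v_j\}\bigr)\geq f(U)=\mathrm{opt}$, a contradiction. Therefore $v_j\in U$, i.e.\ $U^*\subseteq U$; together with $U\subseteq U^*$ this yields $U=U^*$. Hence $U^*$ is feasible with $f(U^*)=\mathrm{opt}=\Pi_v(V)$, so it is optimal.

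Finally I would verify this is an FPT-reduction in the sense of Definition~\ref{def:reduction}. Apart from the $n+1$ oracle calls, the procedure is polynomial, and each call is on an instance induced by some $V'\subseteq V$, of size at most~$n$ (so polynomially bounded) and — this is precisely where the hypothesis that~$p$ is non-decreasing enters — with $p(V')\leq p(V)$, so the function~$h$ of Definition~\ref{def:reduction} may be taken to be the identity. Thus $(\Pi_c,p)\leq_{FPT}(\Pi_v,p)$ (indeed the reduction is polynomial).
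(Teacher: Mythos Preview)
Your proof is correct and follows essentially the same self-reduction as the paper: iteratively test whether a vertex can be discarded without lowering the optimal value, keep those that cannot, and use the non-decreasing hypothesis on~$p$ to certify that all oracle calls stay within the required parameter bound. Your argument is in fact more careful than the paper's on the one point that deserves it---showing that the surviving set $U^*$ is itself feasible rather than merely containing an optimal solution---but the underlying idea is identical.
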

\begin{proof}
Considering an instance~$V$ of~$\Pi_c$ and let $S=\emptyset$. The reduction
consists of taking any element $v\in V$ and testing the value of the instance
$V\setminus \{v\}$. If this value is smaller than the value of the instance~$V$, then
every optimal solution must include~$v$, in which case we add it to a set~$S$.
If after removing~$v$, the optimal value is unchanged, there must be an optimal
solution which does not contain~$v$, in which case we delete~$v$ from the
graph.  We continue this process, deleting vertices from the graph where
possible until all remaining vertices belong to~$S$. Then~$S$ is an optimal
solution.  In all, we will need~$O(|V|)$ requests to an Oracle~$\mathcal{O}$
for~$\Pi_v$ on sub-instances of~$V$. The hypothesis on the parameter makes the
whole process FPT with respect to~$p$ if the oracle is also FPT. This concludes
the proof.
\end{proof}
\begin{cor}
If $\Pi\in\{\Pi^\alpha, \Pi^\omega, \Pi^\nu, \Pi^\tau\}$ and $p\in\{\alpha,
\omega, \chi, \nu, \tau\}$ then $(\Pi_c,p)\leq_{FPT} (\Pi_v,p)$.
\end{cor}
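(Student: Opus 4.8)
The plan is to derive this corollary from the preceding proposition on hereditary maximization problems, after first putting the two minimization problems into a suitable form. I begin by noting that $\Pi^\alpha$ and $\Pi^\omega$ are already hereditary maximization problems in the sense of~\eqref{eq:prog-her}: taking $V$ to be the vertex set of the input graph $G$, the objective is $f(U)=|U|$ and the predicate $h(U)$ asks that $G[U]$ be an independent set (resp.\ a clique), and both of these are hereditary properties. Then I would verify that each parameter $p\in\{\alpha,\omega,\chi,\nu,\tau\}$ is non-decreasing under vertex deletion, i.e.\ $U'\subseteq U\Rightarrow p(G[U'])\leq p(G[U])$: for $\alpha$, $\omega$ and $\chi$ this is immediate (restrict a maximum independent set, a maximum clique, or a proper colouring of $G[U]$ to $U'$), and for $\tau$ and $\nu$ it holds because intersecting a minimum vertex cover (resp.\ minimum feedback vertex set) of $G[U]$ with $U'$ is a vertex cover (resp.\ feedback vertex set) of $G[U']$. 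The preceding proposition then yields $(\Pi^\alpha_c,p)\leq_{FPT}(\Pi^\alpha_v,p)$ and $(\Pi^\omega_c,p)\leq_{FPT}(\Pi^\omega_v,p)$ for every parameter $p$ in the list.

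For $\Pi^\tau$ and $\Pi^\nu$, which are minimization problems and hence not directly of the form~\eqref{eq:prog-her}, the idea is to reduce them, in a \emph{parameter-preserving} way, to hereditary maximization problems on the \emph{same} graph. For $\Pi^\tau$ I would use that $C$ is a minimum vertex cover of $G$ iff $V(G)\setminus C$ is a maximum independent set of $G$; since this transformation does not alter $G$, it gives polynomial (hence FPT) reductions $(\Pi^\tau_c,p)\leq_{FPT}(\Pi^\alpha_c,p)$ and $(\Pi^\alpha_v,p)\leq_{FPT}(\Pi^\tau_v,p)$, so chaining with the previous paragraph gives $(\Pi^\tau_c,p)\leq_{FPT}(\Pi^\alpha_c,p)\leq_{FPT}(\Pi^\alpha_v,p)\leq_{FPT}(\Pi^\tau_v,p)$. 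For $\Pi^\nu$ I would argue analogously, letting $\Pi^{\mathrm{F}}$ denote the problem of finding an induced forest of maximum size (``induced forest'' is a hereditary property, so $\Pi^{\mathrm{F}}$ falls under the preceding proposition): $F$ is a minimum feedback vertex set of $G$ iff $V(G)\setminus F$ induces a largest induced forest of $G$, whence $(\Pi^\nu_c,p)\leq_{FPT}(\Pi^{\mathrm{F}}_c,p)\leq_{FPT}(\Pi^{\mathrm{F}}_v,p)\leq_{FPT}(\Pi^\nu_v,p)$.

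Finally I would invoke transitivity of $\leq_{FPT}$, which is immediate from Definition~\ref{def:reduction} (oracle calls compose, instance sizes stay polynomially bounded, and since all parameters in each chain coincide the bounding function $h$ may be taken to be the identity). I expect the only real issue to be the two minimization problems: the resolution is simply that complementation within the fixed vertex set $V(G)$ converts $\Pi^\tau$ into $\Pi^\alpha$ and $\Pi^\nu$ into $\Pi^{\mathrm{F}}$ without changing the graph, and therefore without changing any of the five parameters; everything else, including the monotonicity of $\nu$ and $\tau$ under vertex deletion, is routine.
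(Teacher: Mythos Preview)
Your proposal is correct and follows the paper's intended approach: the corollary is stated immediately after the proposition on hereditary maximization problems with non-decreasing parameters and is meant to follow from it directly, and you supply precisely the verifications needed (heredity of the relevant properties and monotonicity of each of the five parameters under vertex deletion). The paper leaves the treatment of the two minimization problems $\Pi^\tau$ and $\Pi^\nu$ implicit; your explicit reduction to their complementary hereditary maximization problems ($\Pi^\alpha$ and maximum induced forest, respectively) on the \emph{same} graph---so that all five parameters are unchanged---is exactly the right way to make that step rigorous.
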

%Maximum stable set problem and maximum clique problems are hereditary
%maximization problem with additive objective function. $\alpha$, $\omega$ and
%$\chi$ are monotonous parameters. So we have $(\Pi_c,p)\leq_{FPT} (\Pi_v,p)$
%for $\Pi$ either maximum stable set or maximum clique and $p\in\{\alpha,
%\omega, \chi\}$. This also hold for minimum vertex cover problem with these
%parameters by immediate reduction to Maximum Stable Set problem.
%
Note also that for any constructive problem that can be solved in FPT time with
respect to a given parameter $p$, the FPT equivalence between non-constructive
and constructive versions~$(\Pi_c,p)$ and~$(\Pi_v,p)$ is trivial and
consequently, this holds for all problems considered in this paper under the
parameters~$\tau$ and~$\nu$.
%
%We now give some examples of cases with equivalence between constructive and
%non-constructive optimization obtained by direct reduction.
\begin{prop}\label{prop:direct_red1}
$(\Pi_c,p)\leq_{FPT} (\Pi_v,p)$ for $\Pi = \Pi^\gamma$, $p \in \{\alpha,
\gamma, i\}$ and $\Pi = \Pi^i$, $p \in \{\alpha, i\}$.
%$(\Pi,p)\in\{(\Pi^\gamma,\alpha), (\Pi^\gamma,\gamma), (\Pi^i, \alpha),  (\Pi^i,i)\}$.
\end{prop}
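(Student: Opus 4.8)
The goal is to show that for a given value oracle, we can construct an optimal solution in FPT time with respect to $p$, where $p$ ranges over $\{\alpha,\gamma,i\}$ for $\Pi^\gamma$ and over $\{\alpha,i\}$ for $\Pi^i$. The natural strategy mirrors the hereditary-case argument given above, but these problems are \emph{not} hereditary (the dominating-set property is not monotone under vertex deletion, and ``independent dominating set of size $\leq k$'' is not hereditary either), so we cannot simply delete vertices. Instead, the plan is to build the solution one vertex at a time using the standard self-reduction trick: decide for each vertex whether it belongs to \emph{some} optimal solution, fix that choice, and recurse — but implement ``fix this choice'' by an appropriate local modification of the graph rather than by deletion, so that the parameter does not blow up.

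**Key steps.** First, I would recall that by Proposition~\ref{prop:prop1} (applied with $f$ the size of the set, which is integral and bounded by $n$) the value version reduces to the decision version, so it suffices to work with a decision oracle; equivalently, I can query the optimal value directly. Second, for $\Pi^\gamma$: given $G$, compute $\gamma(G)$ via the oracle. Pick a vertex $v$. To test whether some minimum dominating set contains $v$, I want to query the optimal dominating-set value of a graph $G_v$ in which $v$ is \emph{forced} into every solution: for instance, attach two new pendant vertices to $v$ (so any dominating set must either pick $v$ or pick both pendants, and picking $v$ is never worse), or more cleanly attach a pendant vertex $v'$ whose only neighbour is $v$ together with a pendant on $v'$ — one checks that $\gamma(G_v)=\gamma(G)+1$ iff $v$ lies in some minimum dominating set of $G$, where the $+1$ accounts for the gadget. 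If yes, add $v$ to the solution $S$, and replace $G$ by the graph where $v$ and $N[v]$ are marked ``already dominated'' — which for dominating set means: delete $v$, and for each $u\in N(v)$ we no longer require $u$ to be dominated. This ``some vertices need not be dominated'' generalisation is still a dominating-type problem, so I would phrase the whole induction for the variant $\Pi^\gamma$ on graphs with a distinguished set of pre-dominated vertices, and check that the oracle for plain $\Pi^\gamma_v$ still suffices to answer the needed value queries on these instances (again via small gadgets that simulate pre-domination). If $v$ is in no minimum dominating set, then some minimum dominating set avoids $v$, so I may delete $v$ but keep all its neighbours as vertices still needing domination. Iterating gives an optimal $S$ after $O(n)$ oracle calls. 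The same scheme works for $\Pi^i$, where ``forcing $v$ in'' must also respect independence: here I would test membership of $v$ in some minimum \emph{independent} dominating set by a gadget forcing $v$ and then, upon success, delete $N[v]$ entirely (as in the classical independent-set self-reduction) while recording that $N(N[v])$ may be left undominated — noting that after deleting $N[v]$ the remaining instance, with those relaxations, is again an independent-domination instance.

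**Parameter control and the main obstacle.** For each parameter I must check the reduction stays FPT, i.e.\ that every graph passed to the oracle has its parameter bounded by $h(p(G))$ for some function $h$. For $p=\gamma$ and $p=i$ this is immediate for $\Pi^\gamma$ and $\Pi^i$ respectively, since the gadgets change $\gamma$ or $i$ only by a bounded additive constant and the relaxed (pre-dominated) sub-instances only decrease it — and the number of calls is polynomial, not merely FPT, so composing with an FPT oracle is fine. The genuinely delicate case is $p=\alpha$: I must ensure that attaching pendant gadgets and passing to induced-type sub-instances does not increase the independence number beyond $O(\alpha(G))$. Pendant vertices can raise $\alpha$, so here I expect the main obstacle: I would instead force $v$ into a dominating set by a gadget that does \emph{not} inflate $\alpha$ — e.g.\ attaching a triangle through $v$, or better, by querying values of two related graphs and taking a difference, so that ``$v$ is in some minimum dominating set'' is detected without ever building a graph of larger independence number than $G$ plus a constant. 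Verifying that such an $\alpha$-preserving forcing gadget exists, and that deleting $v$ (or $N[v]$) together with the pre-domination bookkeeping keeps $\alpha$ non-increasing, is the crux of the argument; the rest is the routine self-reduction bookkeeping already illustrated in the hereditary-case proposition above.
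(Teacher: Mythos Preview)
Your treatment of $\Pi^i$ is essentially the paper's: test whether $v$ lies in some optimum by comparing $i(G)$ with $i(G\setminus N[v])+1$, then recurse on $G\setminus N[v]$; the parameter bounds $\alpha(G')\leq\alpha(G)$ and $i(G')\leq i(G)+1$ are exactly those used.

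For $\Pi^\gamma$, however, there is a genuine gap. Your step ``if $v$ is in no minimum dominating set, delete $v$'' is wrong as stated: $v$ itself must still be dominated by some neighbour, and deleting $v$ drops that constraint, so the recursed-on instance can have strictly smaller optimum. You then need yet another gadget layer (forbid $v$ from the solution while keeping it dominated), on top of the pre-domination bookkeeping you already flag as unspecified. More importantly, you correctly identify the $\alpha$-control as the crux but leave it open, proposing triangles or value-differences without checking either.

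The paper's fix is a single clean idea that dissolves all of this at once. Instead of committing to one vertex and modifying $G$, it tests a whole candidate subset $V'$ simultaneously: attach one pendant to \emph{each} vertex of $V'$ to form $G_{V'}$, and observe that $\gamma(G_{V'})=\gamma(G)$ iff some minimum dominating set of $G$ contains all of $V'$. One then grows $V'$ one vertex at a time, always querying the oracle on the original $G$ with pendants attached, never deleting anything and never needing pre-domination. The point you missed is that pendants are harmless for $\alpha$: since $|V'|\leq\gamma(G)\leq i(G)\leq\alpha(G)$, we get $\alpha(G_{V'})\leq\alpha(G)+|V'|\leq 2\alpha(G)$, and similarly $i(G_{V'})\leq 2i(G)$ and $\gamma(G_{V'})\leq\gamma(G)+1$. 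So the very gadget you rejected as $\alpha$-inflating is exactly the right one, once applied to the set rather than iterated with graph surgery.
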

\begin{proof}
Here we need explicit reductions. We first consider the $\Pi^\gamma$ case.
Consider a graph $G=(V,E)$ and a set of vertices $V'\subset V$. We let $G_{V'}$
be the graph obtained from $G$ by adding a stable set $V''$, such that
$|V''|=|V'|$ and the edges between $V'$ and $V''$ form a perfect matching.

We then have $\gamma(G_{V'})=\gamma(G)$ if and only if there is a minimum
dominating set in~$G$ containing every vertex in  $V'$. Indeed, consider a
minimum dominating set~$DS'$ of~$G_{V'}$. If $v' \in V', v'' \in V''$ are
matched vertices then $DS'\cap\{v',v''\}\neq \emptyset$. Moreover, if~$DS'$
contains any vertices of $V''$, we can replace them by the corresponding
vertices of $V'$, to obtain a new dominating set of~$G_{V'}$ that contains
every vertex in $V'$ (and no vertex of $V''$).  Thus $\gamma(G_{V'})\geq
\gamma(G)$ and if $\gamma(G_{V'})=\gamma(G)$, we have built a minimum
dominating set in~$G$ containing every vertex in $V'$.  Conversely, suppose
that there is a minimum dominating set~$DS$ of~$G$ containing every vertex of
$V'$. It is a dominating set in~$G_{V'}$ and $\gamma(G_{V'})=\gamma(G)$, as
required. 

Now, give an oracle for~$\Pi^\gamma_v$, for any set $V' \subseteq V$, we can
decide whether or not there is a minimum dominating set containing every vertex
in~$V'$. We use this to find a vertex $v_1$ which is in some minimum dominating
set. If $v_1$ does not dominate the graph, we then find a vertex $v_2$ such
that both $v_1$ and $v_2$ are in some minimum dominating set. We continue this
process, increasing $V'$ by one vertex at a time, until we find such a set $V'$
that dominates $G$. By construction, this will be a minimum dominating set for
$G$, and we will have called the oracle at most $n$ times to construct it.

Note that for any set $V'$ that we consider during this process, $\gamma(G_{V'})\leq
\gamma(G)+1$. Also, since $|V'|\leq\gamma(G)$, we also know that $i(G_{V'})\leq
|V'| + i(G) \leq \gamma(G)+i(G)\leq 2i(G)$ and $\alpha(G_{V'})\leq |V'| +
\alpha(G) \leq \gamma(G)+\alpha(G)\leq 2\alpha(G)$. It follows that this
reduction satisfies the conditions of Definition~\ref{def:reduction} for each
of the three parameters $\alpha, \gamma, i$.

For~$\Pi^i$ we devise the following reduction for deciding whether there is a
minimum independent dominating set in~$G$ containing~$v$: consider~$G'$
obtained from~$G$ by removing~$N[v]$. Then $i(G)=i(G')+1$ if and only if~$v$
belongs to some optimal solution. Indeed, suppose $i(G)=i(G')+1$ and take a
minimum independent dominating set~$IDS'$ of~$G'$. Then $IDS'\cup\{v\}$
is an independent dominating set of~$G$ of size~$i(G)$. Conversely, if  an
optimal solution~$IDS$ contains~$v$, then $IDS\setminus\{v\}$ is an independent
dominating set of~$G'$. Note moreover that $\alpha(G')\leq \alpha(G)$ and
that $i(G')\leq i(G)+1$ for any considered graph~$G'$. Here we simply need
to iterate the process on the remaining graph. Again, this reduction satisfies
the conditions of Definition~\ref{def:reduction} for the considered parameters.
\end{proof}
To summarize, in this section we have shown the equivalence between
constructive and non-constructive optimization for all problems and parameters
considered in Table~\ref{tab:results}.

Note that if $(\Pi_c,p)\leq_{FPT} (\Pi_v,p)$ then positive (FPT) and hardness
results equivalently hold for one or the other version. Consequently, in the
following sections, we do not need to specify which version of the problem we refer to. By
Proposition~\ref{prop:prop1}, we can consider FPT results to hold for the
constructive version and hardness results to hold for the decision version.

\section{Main results}\label{sec:proof}

\subsection{Tractability results}

We first recall the following inequalities, which will be useful in many of the
proofs in this section:
\begin{eqnarray*}
&&\alpha+\tau = n\\
&&\alpha \geq i \geq \gamma\\
&&\Delta+1 \geq \chi \geq \omega\\
&&\tau \geq \nu\\
\end{eqnarray*}
\begin{prop}
	$\Pi^\nu, \Pi^\omega, \Pi^\alpha, \Pi^\tau$ and $\Pi^\chi$ are  FPT($\tau$) and FPT($\nu$).
\end{prop}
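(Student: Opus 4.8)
The plan is to reduce the whole statement to fixed-parameter tractability with respect to $\nu$ alone. Indeed, since $\nu \le \tau$ and the function $g$ in the definition of an FPT algorithm may be taken non-decreasing, any algorithm running in time $O(g(\nu(G))P(n))$ also runs in time $O(g(\tau(G))P(n))$; hence FPT($\nu$) implies FPT($\tau$), and it suffices to prove that each of $\Pi^\nu,\Pi^\omega,\Pi^\alpha,\Pi^\tau,\Pi^\chi$ is FPT($\nu$). For $\Pi^\nu$ this is precisely the result of~\cite{nufpt}, so only $\Pi^\omega,\Pi^\alpha,\Pi^\tau,\Pi^\chi$ remain.

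The main idea is that a small feedback vertex set yields small treewidth. First I would run the algorithm of~\cite{nufpt} to compute a minimum feedback vertex set $F$, so that $G-F$ is a forest and $|F|=\nu$. A forest admits a tree decomposition all of whose bags have size at most $2$; adding $F$ to every bag produces a tree decomposition of $G$ of width at most $\nu+1$, computed in time $O^*(g(\nu))$. Now \is, \vc, \is (equivalently $\Pi^\alpha$, $\Pi^\tau$), maximum clique and chromatic number are all solvable in FPT time parameterized by the width of a given tree decomposition --- by the standard bottom-up dynamic programming over the decomposition, or via Courcelle's theorem (for every fixed $k$, the property ``$G$ is $k$-colourable'' is expressible in monadic second-order logic, and $\chi(G)\le\mathrm{tw}(G)+1$ bounds the relevant range of $k$). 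Composing the two FPT steps gives FPT($\nu$), hence FPT($\tau$), for all four problems.

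If one prefers a self-contained argument avoiding treewidth algorithms, I would instead brute-force over subsets of the minimum feedback vertex set $F$ obtained from~\cite{nufpt}. Since $G-F$ is a forest, $\omega(G-F)\le 2$ and $\chi(G-F)\le 2$, so $\omega(G),\chi(G)\le\nu+2$. For $\Pi^\omega$: enumerate the $\le 2^\nu$ subsets $S\subseteq F$ that induce a clique, and for each add to $S$ a largest clique of the forest $(G-F)\bigl[\bigcap_{v\in S}N(v)\bigr]$, which has at most two vertices and is found in polynomial time. For $\Pi^\alpha$ (and then $\Pi^\tau$, since $\tau=n-\alpha$): enumerate the $\le 2^\nu$ independent subsets $S\subseteq F$, and add to each a maximum independent set of the forest $(G-F)-N(S)$, computed by tree dynamic programming. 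For $\Pi^\chi$: for each $k\le\nu+2$, enumerate the $\le k^\nu$ colourings of $G[F]$ with $k$ colours and test whether each extends to $G$ by solving the induced list-colouring instance on the forest $G-F$, which is polynomial via a bottom-up pass recording, at each vertex, the colours consistent with the subtree below it. The only delicate point I anticipate is the chromatic-number case: one must bound $\chi$ by a function of $\nu$ and observe that extending a colouring of $F$ amounts to list-colouring a forest (polynomial), not ordinary colouring; everything else is routine.
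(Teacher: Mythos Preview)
Your proposal is correct, and in fact your ``self-contained'' alternative is essentially the paper's own proof: the paper first reduces to FPT($\nu$) via $\nu\le\tau$, invokes~\cite{nufpt} for $\Pi^\nu$, computes a minimum feedback vertex set $F^*$, and then does exactly the brute-force over subsets of $F^*$ that you describe --- cliques of $F^*$ extended by at most two forest vertices for $\Pi^\omega$, independent subsets of $F^*$ extended by a greedy maximum independent set on the remaining forest for $\Pi^\alpha$ (and hence $\Pi^\tau$), and for $\Pi^\chi$ the bound $\chi\le\nu+2$ followed by enumerating all $k$-colourings of $F^*$ and extending each via list-colouring on the forest (citing~\cite{JS97} for the latter).

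Your primary route via treewidth is a genuinely different, and arguably cleaner, packaging that the paper does not use: you observe that adding $F$ to every bag of a width-$1$ decomposition of the forest $G-F$ yields a tree decomposition of $G$ of width at most $\nu+1$, and then invoke off-the-shelf treewidth algorithms (or Courcelle's theorem, using $\chi\le\mathrm{tw}+1$ to bound the range of $k$). This buys modularity and immediately extends to any problem known to be FPT in treewidth, at the cost of black-boxing heavier machinery; the paper's direct arguments are more elementary and give explicit $O^*(2^\nu)$ or $O^*(k^\nu)$ bounds. Either approach is fully adequate here.
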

\begin{proof}
Since $\tau \geq \nu$, we only need to prove that these problems are FPT($\nu$).

The problem~$\Pi^\nu$, it is known to be FPT($\nu$)~\cite{nufpt}. For all of
the remaining problems, we start by finding a minimum feedback vertex set~$F^*$
in~$O^*(f(\nu))$ time.

Next, we consider the maximum clique problem~$\Pi^\omega$. %of calculating $\omega$.
A clique contains at most two vertices from the forest $V\setminus F^*$. For
each subset  $C \subset F^*$ that induces a clique, we search for two adjacent
vertices~$v,v'$ which are also adjacent to every vertex of~$C$, and add them
to~$C$. If we cannot find such a pair, we look for a single vertex adjacent to
every vertex of~$C$, and add it to~$C$. If we cannot find such a vertex, we
simply keep~$C$.  Finally, we return the largest clique constructed in this
way. This algorithm has running time bounded above by~$O^*(f(\nu)2^{\nu})$.

Next, we consider the problem~$\Pi^\tau$.  This was proved in~\cite{JB13}, which
considered the size of kernels for this problem. Using this, the result
for~$\Pi^\alpha$ follows from the identity $\alpha = n - \tau$. This result can also be
proved directly using the following simple argument.  For each subset $S \subset F^*$
which is independent, we can discard~$N(S)\cup F^*$ and use a greedy algorithm
to compute a maximum independent set on the remaining forest in polynomial
time. This algorithm has running time bounded above by~$O^*(f(\nu)2^{\nu})$.

%FIXME: This proof was incorrect. Please could you check it's correct now?
Finally, we show how to solve $\Pi^\chi$. %calculate $\chi$.  
Notice that $\chi(F^*) \leq \chi(G) \leq \chi(F^*)+2 \leq \nu +2$. We take each
value of $k=1,\ldots, \nu+2$ in turn and test if $G$ has a $k$-colouring. To do
this, we first find every $k$-colouring of $F^*$ (which can be done in
$O^*(k^{\nu})$ time). For each such colouring, we test if it extends to a
$k$-colouring of $G$. To do this, we try to find a $k$ list-colouring of the
forest $V\setminus F^*$.
For a vertex $v$, the list $L(v)$ of admissible colours is precisely those that
are not used to colour any vertex in $N(v) \cap F^*$.  This
list-colouring problem can be solved in polynomial time~\cite{JS97}. Note that
the algorithm will always find a valid $k$-colouring when $k=\nu +2$. This
whole procedure runs in FPT time when parameterized by $\nu$. This completes
the proof.
\end{proof}
Suppose~$T_1$ and~$T_2$ are vertex-disjoint trees rooted at~$v_1$ and~$v_2$,
respectively. Let $T_1 \leftarrow T_2$ be the tree rooted at~$v_1$ obtained by
taking the disjoint union of~$T_1$ and~$T_2$ and then joining~$v_1$ to~$v_2$
with an edge. Note that every rooted tree can be built from its vertex set
using just this operation. Moreover, such a representation for a tree can
easily be computed in linear time.
\begin{prop}
	$\Pi^\gamma$ and $\Pi^i$ are FPT($\nu$) and FPT($\tau$).
\end{prop}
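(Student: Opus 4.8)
The plan is to establish FPT($\nu$); since $\tau \geq \nu$ and (without loss of generality) the function in an FPT bound is non-decreasing, every FPT($\nu$) algorithm is automatically FPT($\tau$). As in the previous proposition we begin by computing a minimum feedback vertex set $F^*$, with $|F^*| = \nu$, in $O^*(f(\nu))$ time~\cite{nufpt}, so that $G \setminus F^*$ is a forest. The essential new difficulty, compared with $\Pi^\alpha$ or $\Pi^\omega$, is that domination crosses the $F^*$–forest boundary in \emph{both} directions: a forest vertex may be dominated by a vertex of $F^*$, and a vertex of $F^*$ may be dominated by a forest vertex. Hence enumerating $D \cap F^*$ does not reduce the problem to an unconstrained one on the forest, but to an \emph{annotated} domination problem on a forest, which we solve by dynamic programming using the decomposition of each tree through the operation $T_1 \leftarrow T_2$.

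For $\Pi^\gamma$: enumerate all $2^\nu$ subsets $D_F \subseteq F^*$, each a candidate value of $D \cap F^*$ in a minimum dominating set $D$. Having fixed $D_F$, the vertices of $N[D_F]$ are already dominated; write $A = F^* \setminus N[D_F]$ for the still-undominated vertices of $F^*$ (each of which must receive a neighbour from $D' := D \setminus F^* \subseteq V \setminus F^*$) and let $B \subseteq V \setminus F^*$ be the forest vertices not in $N[D_F]$ (each of which must be dominated inside the forest). We must therefore compute a minimum $D' \subseteq V \setminus F^*$ with $B \subseteq N[D']$ and $N(a) \cap D' \neq \emptyset$ for every $a \in A$. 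This is solved by a bottom-up dynamic programme over the forest: for each rooted subtree we record, for every pair consisting of (i) the status of its root (in $D'$; not in $D'$ but already dominated; not in $D'$ and not yet dominated) and (ii) the subset of $A$ whose constraint has been satisfied inside that subtree, the minimum number of forest vertices used. The merge rule for $T = T_1 \leftarrow T_2$ updates the root status as in the classical tree dominating-set dynamic programme, additionally takes the union of the two $A$-subsets and, when the root of $T_1$ is placed in $D'$, marks those $a \in A$ adjacent to it; the tables of the different components of the forest are then combined by a min-sum convolution on the $A$-coordinate. Taking the minimum, over all $D_F$ and all admissible root configurations whose $A$-subset equals $A$, of $|D_F| + |D'|$ yields $\gamma(G)$ (and a corresponding set) in $O^*(2^\nu \cdot 2^{|A|}) = O^*(4^\nu)$ time.

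For $\Pi^i$ the same scheme applies, with the extra requirements that $D$ be independent and that every non-$D$ vertex be dominated. We enumerate only those $D_F \subseteq F^*$ that are independent; we forbid every forest vertex of $N(D_F)$ from lying in $D'$; in the dynamic programme we forbid two adjacent forest vertices from both lying in $D'$ (that is, we delete the transition in which the root of $T_1$ and the joined vertex $v_2$ of $T_2$ are both in $D'$); and we require the "not yet dominated" status to be cleared for every forest vertex outside $N[D_F]$. As before, each $a \in A$ must still be covered by a forest neighbour in $D'$, since $D \cap F^* = D_F$ does not dominate it. The running time is again $O^*(4^\nu)$.

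The step I expect to be the main obstacle is exactly the non-locality just described: the "must be hit'' set $A$ lives on $F^*$ but its neighbours are spread over the whole forest, so a vertex of $A$ may only become satisfied after the last of its forest neighbours is processed; this forces the dynamic-programming states to keep track of which elements of $A$ have been covered so far, costing the extra factor $2^{|A|} \le 2^\nu$. Once this is in place the remainder is routine tree dynamic programming based on the $T_1 \leftarrow T_2$ operation, and the independence condition needed for $\Pi^i$ only removes transitions rather than enlarging the state space.
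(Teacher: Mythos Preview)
Your proposal is correct and follows essentially the same route as the paper: both first reduce FPT($\tau$) to FPT($\nu$), compute a minimum feedback vertex set $F^*$, enumerate $D\cap F^*$, and then solve the residual annotated domination problem on the forest $G\setminus F^*$ by a bottom-up dynamic programme over the $\leftarrow$-decomposition, with states consisting of a three-valued root status together with the subset of $F^*$-vertices already dominated. The only cosmetic difference is that the paper indexes its table by the full set $S=N[D\cup D']\cap F^*$ whereas you index by the subset of $A=F^*\setminus N[D_F]$ that has been hit --- a trivial reparametrization --- and the $\Pi^i$ variant is handled identically in both by restricting to independent $D_F$ and pruning the transition where both endpoints of the new edge lie in $D'$.
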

\begin{proof}
%FIXME: Please could you check if this proof makes sense? Are there any details e.g. about calculating A that we could safely omit?
Again, we only need to prove that the problems are FPT($\nu$).  We start with
the~$\Pi^\gamma$ case.  Consider a graph $G=(V,E)$. Since $\Pi^\nu \in$
FPT($\nu$), we can compute a feedback vertex set~$F^*$ with running
time~$O^*(f(\nu))$. Fix some subset $D \subset F^*$, that we assume to be the
restriction of the minimum dominating set to~$F^*$.

We now run a dynamic programming algorithm. Note that $G[V\setminus F^*]$ is a
forest. For a rooted tree $T$ (with root vertex $v$) which is a subtree of a
tree in $G[V\setminus F^*]$, a set $S \subseteq F^*$ and a value $d \in
\{0,1,2\}$, we define $A(T,S,d)$ to be the minimum size of a set $D' \in V(T)$
such that
%the closed neighbourhood of $D\cup D'$ in $F^*$ is $S$
$N[D\cup D']\cap F^*=S$, and:
\begin{itemize}
\item if $d=0$, $D'$ includes the vertex $v$ and $D \cup D'$ dominates every
vertex in $T$;
\item if $d=1$, $D'$ does not include the vertex $v$, but $D \cup D'$
dominates every vertex in $T$;
\item if $d=2$, $D \cup D'$ dominates every vertex in $T\setminus \{v\}$,
but does not dominate $v$.
\end{itemize}
If, for some choice of $T,S,d$ no such set $D'$ exists, we set $A(T,S,d)=\infty$.

Now, for each tree~$T$ in the forest $G \setminus F^*$, we choose an arbitrary
root vertex~$v$ and find a decomposition of~$T$ using~$\leftarrow$ operations.

Let~$T'$ be a subtree (rooted at~$v'$) of~$T$ that occurs in the decomposition.
We will show how to calculate the value of $A(T',S,d)$ for every possible
choice of~$S$ and~$d$.

First, as a base case, we suppose that $T'$ consists of only a single
vertex~$v'$. If $d=0$, this corresponds to the case where $D'=\{v'\}$. Thus we
set $A(T',S,0)=1$ if $S=N[D \cup \{v'\}]\cap F^*$ and $A(T',S,0)=\infty$
otherwise. If $d=1$, this corresponds to the case where $D'=\emptyset$ and~$v'$
has a neighbour in~$D$. Thus if~$v'$ has a neighbour in~$D$, we set
$A(T',N[D]\cap F^*,1)=0$. If~$v'$ has no neighbour in~$D$ or $S \neq N[D]\cap
F^*$, we set $A(T',S,1)=\infty$. If $d=2$, this corresponds to the case where
$D'=\emptyset$ and~$v'$ does not have a neighbour in~$D$. Thus if~$v'$ has no
neighbour in~$D$, we set $A(T',N[D]\cap F^*,2)=0$. If~$v'$ has a neighbour
in~$D$ or $S \neq N[D]\cap F^*$, we set $A(T',S,2)=\infty$.

Now suppose that $T'$ (rooted at $v'$) contains more than one vertex. Then $T'
= T_1 \leftarrow T_2$ for some $T_1,T_2$ rooted at $v_1,v_2$ respectively, say.
Note that $v_1=v'$, by definition of $\leftarrow$. We now show how to calculate
$A(T',S,d)$ given the values for $A(T_1,S',d')$ and $A(T_2,S',d')$ for all
possible choices of $S'$ and $d'$.

If $d=0$, this corresponds to the case where~$D'$ contains~$v'$. Consider the
restriction of~$D'$ to~$T_1$. We must have that~$T_1$ is dominated by $(D' \cap
V(T_1)) \cup D$ and that it contains the root vertex~$v_1$. In other words,
this restriction must correspond to the $A(T_1,S_1,0)$ case for some~$S_1$. In
this case, any valid~$D'$ for~$T'$ must dominate all of~$T_1$ and all of~$T_2$.
Since $v_1 \in D'$, we know that~$v_2$ is dominated by~$v_1$. Now consider the
restriction of~$D'$ to~$T_2$. We must have that $(D' \cap V(T_2)) \cup D$
dominates $V(T_2)\setminus \{v_2\}$. Since~$v_1$ is adjacent to~$v_2$ and is
present in~$D'$, vertex~$v_2$ may or may not be present in~$D'$ and it may or
may not be dominated by $(D' \cap V(T_2)) \cup D$. This corresponds to the
$A(T_1,S_2,d')$ case for some~$S_2$ and some $d' \in \{0,1,2\}$. We therefore
set $A(T',S,0)$ to be the minimum of $\{A(T_1,S_1,0)+A(T_2,S_2,d') | d' \in
\{0,1,2\},S_1 \cup S_2 = S\}$.

If $d=1$, this corresponds to the case where $v_1 \not \in D'$, but $v_1$ is
dominated by a member of $D'$. This dominating vertex must either be in $D \cup
(V(T_1) \cap D')$ or it must be $v_2$. The restrictions of $D'$ to $T_1$ and
$T_2$ therefore correspond to $A(T_1,S_1,1)$ and $A(T_2,S_2,d')$ for some $d'
\in \{0,1\}$ and some $S_1$ and $S_2$ or they correspond to $A(T_1,S_1,2)$ and
$A(T_2,S_2,0)$ for some $S_1$ and $S_2$.  Therefore, $A(T',S,1)$ is the minimum
of $\{A(T_1,S_1,d_1)+A(T_2,S_2,d_2) | (d_1,d_2) \in \{(1,0),(1,1),(2,0)\}, S_1
\cup S_2 = S\}$.

If $d=2$, this corresponds to the case where $T' \setminus \{v_1\}$ is
dominated by $D \cup D'$, but~$v_1$ is not dominated by $D \cup D'$. This means
that neither~$v_1$ nor~$v_2$ are present in~$D'$. The restriction of~$D'$
to~$T_1$ must be such that $D \cup (D' \cap V(T_1))$ does not dominate~$v_1$,
which corresponds to the $A(T_1,S_1,2)$ case, for some~$S_1$. However,~$v_2$
must be dominated by a vertex in $D \cup (D' \cap V(T_2))$. This corresponds to
the $A(T_2,S_2,1)$ case. Thus $A(T',S,2)$ is the minimum of
$\{A(T_1,S_1,2)+A(T_2,S_2,1) | S_1 \cup S_2 = S\}$.

Using the above recursion, we can calculate the value of $A(T',S,d)$ for every
rooted tree $T'$ in $G[V(G) \setminus F^*]$ in FPT time (with parameter $\nu$).
We label these trees $T_1, \ldots, T_k$.

Now, for $S \subseteq F^*$ and $i \leq k$, let  $B(i,S)$ be the size of the
smallest set $D' \in V(T_1) \cup \cdots \cup V(T_i)$ such that $N[D \cup D']
\cap (F^*\cup V(T_1) \cup \cdots \cup V(T_i)) = S \cup V(T_1) \cup \cdots \cup
V(T_i)$. Note that $B(0, S) = 0$ if $S=N[D] \cap F^*$ and $\infty$ otherwise.
Furthermore, for $i\geq 1, B(i,S)$ is the minimum of $\{B(i-1,S_1)+A(T_i,S_2,d)
| S_1 \cup S_2 = S, d \in \{0,1\}\}$. The minimum size of a dominating set
whose intersection with $F^*$ is $D$ is then $|D|+B(k,F^*)$. All these
calculations can be done in FPT time with parameter $\nu$. We thus branch over
all possible choices of $D$ and then, for each such choice, find the size of
the minimum dominating set whose intersection with $F^*$ is $D$. 

%\medskip

The argument for $\Pi^i$ is similar. Again, we start by finding a minimal
feedback vertex set $F^*$, but this time, we only consider $D \subseteq F^*$
that are independent. We define $A'(T,S,d)$ in the same way as $A(T,S,d)$,
except that now we only consider sets $D'$ such that $D \cup D'$ is
independent.

We now explain how to calculate $A'(T,S,d)$ for the tree~$T$, rooted at~$v$.
Again, we first consider the case where~$T$ contains a single vertex.  If
$d=0$, this corresponds to the case where $D' = \{v\}$. $D \cup D'$ must be
independent, so~$v$ cannot have any neighbours in~$D$. Therefore, we set
$A'(T,S,0)=1$ if $S=N[D \cup \{v\}]\cap F^*$ and~$v$ has no neighbours in~$D$.
Otherwise, we set $A'(T',S,0)=\infty$. If $d=1$, this corresponds to the case
where $D' = \emptyset$ and~$v$ has a neighbour in~$D$. We therefore set
$A'(T,S,1)=0$ if $S=N[D]\cap F^*$ and~$v$ has a neighbour in~$D$. Otherwise, we
set $A'(T,S,1)=\infty$. If $d=2$, this corresponds to the case where $D' =
\emptyset$ and~$v$ does not have a neighbour in~$D$. We therefore set
$A'(T,S,2)=0$ if $S=N[D]\cap F^*$ and~$v$ does not have a neighbour in~$D$.
Otherwise, we set $A'(T',S,2)=\infty$.

Now consider the case where~$T$ contains more than two vertices. Again, it must
be of the form $T_1 \leftarrow T_2$, where~$T_1$ and~$T_2$ are trees rooted
at~$v_1$ and~$v_2$, respectively, say. We now show how to calculate
$A'(T,S,d)$. If $d=0$, this corresponds to the case where $v_1 \in D'$. Note
that $D \cup D'$ must be independent, so $v_2 \not \in D'$. Vertex~$v_1$
dominates~$v_2$, so the restriction of~$D \cup D'$ to~$D \cup T_2$ may or may
not dominate~$v_2$. Therefore $A'(T,S,0)$ is the minimum of
$\{A'(T_1,S_1,0)+A'(T_2,S_2,d') | S_1 \cup S_2 = S, d'\ \in \{1,2\}\}$. If
$d=1$, this corresponds to the case where $v_1 \not \in D'$, but it is
dominated by either~$D$ or~$v_2$. Therefore, as for the case of~$\Pi^\gamma$,
$A'(T,S,1)$ is the minimum of $\{A'(T_1,S_1,d_1)+A'(T_2,S_2,d_2) |(d_1,d_2) \in
\{(1,0),(1,1),(2,0)\}, S_1 \cup S_2 = S\}$. Similarly, $A'(T,S,2)$ is the
minimum of $\{A'(T_1,S_1,2)+A'(T_2,S_2,1) | S_1 \cup S_2 = S\}$.

Finally, we define~$B'(i,S)$ for the~$\Pi^i$ problem as we did~$B(i,S)$ for
the~$\Pi^\gamma$ problem, except that we now demand that $D \cup D'$ is
independent. In the forest $V(G) \setminus F^*$, no vertex of any tree can be
adjacent to a vertex in different tree. Therefore, the algorithm for
calculating~$B'(i,S)$ from $A'(T,S,d)$ is identical to that
calculating~$B(i,S)$ from $A(T,S,d)$ for $\Pi^\gamma$. We complete the proof in
the same way as for~$\Pi^\gamma$.
%
%We compute a minimum feedback vertex set $F^*$ with running time $O^*(f(\nu))$.
%For each independent subset $I \subset F^*$, we remove the vertices in $N(I)$
%and the edges in $F^*\setminus I$; the remaining graph $G'$ is a forest. We
%compute in polynomial time~\cite{idt77} a minimum independent dominating set
%$I'$ in $G'$. $I \cup I'$, is an independent dominating set and the minimal one
%whose restriction to $F^*$ is $I$. We repeat this for any subset of $F^*$ which
%is an independent set and get a running time bounded with $O^*(2^{\nu}f(\nu))
%\subset O^*(f(\tau)2^{\tau})$.\\
\end{proof}

\subsection{Intractability results}

We now prove the negative results claimed in Table~\ref{tab:results}.
\begin{prop}\label{pro:nonXP}
The following hold:
\begin{enumerate}
\item\label{nonXP:paromegachi} $\Pi^\alpha$, $\Pi^\gamma$, $\Pi^\chi$, $\Pi^i$,
$\Pi^\nu$ are $\notin$XP($\chi$) and $\notin$XP($\omega$);
\item\label{nonXP:omegachi-alpha} $\Pi^\omega$ and $\Pi^\chi$ are
$\notin$XP($\alpha$);
\item\label{nonXP:i-gamma} $\Pi^i$ is $\notin$XP$(\gamma)$;
\item\label{nonXP:alphaomegachinu-gammai}  $\Pi^\alpha, \Pi^\tau, \Pi^\omega,
\Pi^\chi$ and $\Pi^\nu$ are $\notin$XP$(\gamma)$ and $\notin$XP$(i)$.
\end{enumerate}
\end{prop}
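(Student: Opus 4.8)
The plan is to deduce every assertion from the NP-completeness results already proved inside Proposition~\ref{pro:cons-paz-mor} (through Proposition~\ref{pro:eq-paz-mor}), using only the following elementary observation: if, for a problem~$\Pi$ and a parameter~$p$, there is a fixed constant~$K$ for which the decision version of $\Pi|_{p-{\rm bounded}}$ with bound~$K$ is NP-hard, then $\Pi\notin{\rm XP}(p)$ (assuming ${\rm P}\neq{\rm NP}$). Indeed, an XP algorithm for $(\Pi,p)$ runs in time $O(n^{g(p)})$, and on instances with $p\leq K$ this is $O(n^{g(K)})$, i.e.\ polynomial; it would therefore place the NP-hard problem $(\Pi|_{p-{\rm bounded}})_d$ in~P. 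Since $f$ is polynomial-time computable for all our problems, the value and constructive versions are at least as hard as the decision version, so the same conclusion holds for each version. So the whole proposition reduces to reading off, for each claim, a fixed bound on the relevant parameter under which the problem is already known to be NP-hard.

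With this in hand, each item is matched to the corresponding item of Proposition~\ref{pro:cons-paz-mor}. For item~\ref{nonXP:paromegachi}, I would use that the proof of Proposition~\ref{pro:cons-paz-mor}(\ref{it:paromegachi}) shows $\Pi^\alpha_d,\Pi^\gamma_d,\Pi^i_d,\Pi^\nu_d$ remain NP-complete on graphs of maximum degree~$3$, and that $\Pi^\chi_d$ remains NP-complete on graphs of bounded maximum degree (via NP-completeness of $3$-colouring on such graphs); on these graphs $\omega\leq\Delta+1$ and $\chi\leq\Delta+1$ are bounded by a constant, so by the observation none of the five problems lies in ${\rm XP}(\omega)$ or ${\rm XP}(\chi)$. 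Item~\ref{nonXP:omegachi-alpha} follows identically from Proposition~\ref{pro:cons-paz-mor}(\ref{it:omegachi-alpha}), which gives NP-completeness of $\Pi^\omega_d$ and $\Pi^\chi_d$ on graphs with $\alpha$ bounded by a constant. Item~\ref{nonXP:i-gamma} follows from Proposition~\ref{pro:cons-paz-mor}(\ref{it:i-gamma}), where $\Pi^i_d$ is NP-complete on graphs with $\gamma=2$. Item~\ref{nonXP:alphaomegachinu-gammai} follows from Proposition~\ref{pro:cons-paz-mor}(\ref{it:alphaomegachinu-gammai}), where $\Pi^\alpha_d,\Pi^\tau_d,\Pi^\omega_d,\Pi^\chi_d,\Pi^\nu_d$ are all NP-complete on graphs with $\gamma=i=1$.

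There is no real obstacle here; the one point that needs a little care is the case of~$\Pi^\chi$ in item~\ref{nonXP:paromegachi}. One cannot simply invoke ``$3$-colouring is NP-complete'' on arbitrary graphs, since restricting the instances of~$\Pi^\chi$ to those with $\chi\leq K$ would make the associated decision question vacuous unless the hard instances themselves respect the bound. This is precisely why one must use the bounded-degree version of $3$-colouring: on that class the whole (hard) family of instances has $\chi$ and $\omega$ simultaneously bounded by a constant, while computing~$\chi$ there remains NP-hard because it decides $3$-colourability. Every other case is an immediate translation of an already-established NP-completeness statement, so the proof amounts to pairing each claim with its source in Proposition~\ref{pro:cons-paz-mor}.
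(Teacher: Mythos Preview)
Your proposal is correct and follows exactly the paper's approach: the paper's proof is a single sentence stating that these results follow from the NP-hardness on bounded-parameter instances established in the proof of Proposition~\ref{pro:cons-paz-mor}. You have simply spelled out the matching of items and the XP-to-P observation that the paper leaves implicit; your extra care regarding $\Pi^\chi$ in item~\ref{nonXP:paromegachi} (using bounded-degree $3$-colouring so that the hard family itself has bounded $\chi$ and $\omega$) is a correct refinement of a point the paper glosses over.
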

\begin{proof}
All these results are a consequence of the fact that these problems  remain
NP-hard if the related parameter is bounded, as shown in the proof of
Proposition~\ref{pro:cons-paz-mor}. 
\end{proof}
%
%\begin{prop}\label{firstwave} We have the following results:
%\begin{itemize}
%\item $\alpha$,$\gamma$,i,$\nu$ are $\notin$XP($\chi$) and $\notin$XP($\omega$).
%\item $\chi$ is $\notin$XP($\omega$).
%\end{itemize}
%\end{prop}
%
%\begin{proof}
%All these results are the consequence of inequality $\Delta(G)+1 \geq \chi(G)
%\geq \omega(G)$. We simply recall that all these problems remain NP-hard on
%graphs of degree 3, and that 3-colouring is NP-hard.
%\end{proof}
%
\begin{prop}\label{prop:alphaalphaprop}
The following hold:
\begin{enumerate}
\item\label{uno} $\Pi^\omega$ is W[1]($\chi$)-hard;
\item\label{prop:alphaalpha} $\Pi^\alpha, \Pi^\tau$ and $\Pi^\nu$ are
W[1]($\alpha$)-hard;
\item\label{prop:gammaialpha} $\Pi^\gamma$ and $\Pi^i$ are W[2]-hard$(\alpha)$ and W[2]-hard$(i)$.
\end{enumerate}
\end{prop}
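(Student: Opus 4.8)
The plan is to handle the three items separately, each time reducing from a problem that is known to be hard for its \emph{standard} parameter and using gadgets to keep the new parameter bounded. For item~\ref{uno} I would reduce from \textsc{Multicolored Clique} (a.k.a.\ \textsc{Partitioned Clique}), which is W[1]-complete parameterized by the number~$k$ of colour classes: given~$G$ with a partition $V(G)=V_1\cup\cdots\cup V_k$, decide whether some clique meets every~$V_i$. Deleting the intra-class edges changes nothing, so I may assume each~$V_i$ is independent; then the partition is a proper $k$-colouring, so $\chi(G)\le k$, while any clique takes at most one vertex per class, so $\omega(G)\le k$, with equality exactly when a multicoloured clique exists. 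Hence the reduction is the identity on graphs: on input~$(G,k)$ I ask the $\Pi^\omega$-oracle for~$\omega(G)$ and answer \textsc{yes} iff $\omega(G)\ge k$. Since $\chi(G)\le k$ is bounded by the source parameter, Definition~\ref{def:reduction} holds with $h$ the identity, so $(\Pi^\omega,\chi)$ is W[1]-hard.

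For item~\ref{prop:alphaalpha}, W[1]-hardness of $(\Pi^\alpha,\alpha)$ is precisely the classical W[1]-completeness of \is by the solution size, and for~$\Pi^\tau$ I would reduce from $(\Pi^\alpha,\alpha)$ on the \emph{same} graph, using $\alpha+\tau=n$ to rewrite ``$\alpha(G)\ge k$'' as ``$\tau(G)\le n-k$'' while leaving~$\alpha$ untouched. The content is in $(\Pi^\nu,\alpha)$, for which I would reduce from a colourful W[1]-hard problem such as \textsc{Multicolored Clique} or \textsc{Grid Tiling}. The idea is to represent the~$k$ vertex-choices (and the $\binom k2$ edge-choices) by pairwise-disjoint \emph{cliques}, so that the whole vertex set of the constructed graph~$H$ is covered by $O(k^2)$ cliques and hence $\alpha(H)=O(k^2)$, bounded by a function of~$k$. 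A minimum feedback vertex set keeps at most two vertices of each clique, so I would choose the budget~$B$ to be $n(H)$ minus the number of survivors in a ``consistent'' configuration, which forces a minimum solution to keep exactly two per clique; then I would add inter-clique edges so that survivors coming from a genuine multicoloured clique (or tiling) induce a forest while any inconsistent set of survivors closes a cycle. This gives $\nu(H)\le B$ iff the source is a positive instance.

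For item~\ref{prop:gammaialpha}, W[2]-hardness of $(\Pi^i,i)$ is the known W[2]-completeness of minimum independent dominating set by its solution size. The other three claims I would get by reductions from \ds (resp.\ minimum independent dominating set) parameterized by the solution size --- equivalently from \textsc{Set Cover} parameterized by~$k$. The obstacle is that the textbook \textsc{Set Cover}-to-domination graph has unbounded independence number, so I would modify it: group the element vertices into cliques and add a ``capping'' gadget of~$O(k)$ vertices, so that the resulting graph~$H$ has $\alpha(H)$ bounded by a function of~$k$ (hence also $i(H)$, which lies between~$\gamma$ and~$\alpha$ up to the gadget), while $\gamma(H)$ (resp.\ $i(H)$) equals a fixed additive shift of $\min(\mathrm{opt},k+1)$, which still separates positive from negative instances. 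Reasoning about \emph{which} vertices of~$H$ form an optimal (independent) dominating set, rather than just about its value, is where I would invoke the constructive/non-constructive equivalence of Section~\ref{sec:cnc} (Propositions~\ref{pro:eq-paz-mor} and~\ref{prop:direct_red1}).

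The easy parts are item~\ref{uno} and the $\Pi^\alpha,\Pi^\tau$ cases of item~\ref{prop:alphaalpha}, which merely repackage classical theorems. I expect the genuine obstacle to be the gadget design in the last two reductions: for $(\Pi^\nu,\alpha)$, simultaneously forcing every ``inconsistent'' maximum induced forest to contain a cycle \emph{and} covering the graph by $O(k^2)$ cliques so that~$\alpha$ stays small; and for the W[2] reductions, building a capping gadget that keeps~$\alpha$ (and~$i$) bounded while preserving the optimal domination value as a faithful, bounded function of the source parameter. Getting those two gadgets right --- not the overall reduction templates --- is the crux.
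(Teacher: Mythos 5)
Your treatment of item~\ref{uno} and of the $\Pi^\alpha/\Pi^\tau$ part of item~\ref{prop:alphaalpha} matches the paper exactly (delete the monochromatic edges of a \textsc{multicoloured clique} instance so that the colour classes witness $\chi\leq k$; pass to the complement graph and use $\alpha+\tau=n$). The gap is that for the two claims that carry all the difficulty --- W[1]($\alpha$)-hardness of $\Pi^\nu$ and W[2]($\alpha$)-hardness of $\Pi^\gamma$ and $\Pi^i$ --- you do not actually give a proof: you describe a template (``cover the constructed graph by $O(k^2)$ cliques'', ``add a capping gadget of $O(k)$ vertices so the optimum becomes an additive shift of $\min(\mathrm{opt},k+1)$'') and then state yourself that designing these gadgets is the crux and is left unresolved. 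As written, neither reduction is established; in particular it is far from clear that a grid-tiling-style feedback-vertex-set gadget can simultaneously encode consistency by forced cycles \emph{and} keep the whole graph covered by few cliques, since the standard consistency gadgets in such reductions introduce large independent sets.

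Both gaps can be closed with constructions far lighter than what you envisage, and this is what the paper does. For $(\Pi^\nu,\alpha)$: take $G'=G\times K_2$ (two copies of $G$ plus a perfect matching between corresponding vertices). Then $\alpha(G')=\alpha(G)$ and $\nu(G')=2\left(|V(G)|-\alpha(G)\right)$ exactly: deleting a feedback vertex set leaves a forest, which contains an independent set on at least half of its vertices, so $\nu(G')\geq |V(G')|-2\alpha(G')$; conversely $S\times\{1,2\}$ induces a matching for any stable set $S$ of $G$. Hence a $\Pi^\nu$-oracle on $G'$ returns $\alpha(G)$ while the parameter $\alpha$ is unchanged, giving $(\Pi^\alpha,\alpha)\leq_{FPT}(\Pi^\nu,\alpha)$ with no consistency gadget at all. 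For item~\ref{prop:gammaialpha} the paper self-reduces $(\Pi^i,i)$ to $(\Pi^i,\alpha)$: form $G_k$ from $k$ copies of $V(G)$, turn each copy into a clique, and join $(u,i)$ to $(v,j)$ for $i\neq j$ whenever $v\in N_G[u]$. Then $\alpha(G_k)\leq k$, $i(G_k)=i(G)$ once $k\geq i(G)$, and any independent dominating set of $G_k$ missing some copy projects to one of $G$; iterating over $k=1,2,\dots$ with a constructive oracle (which is exactly where Proposition~\ref{prop:direct_red1} is invoked, as you anticipated) terminates by $k=i(G)+1$, at which point $\alpha(G_k)\leq i(G)+1$. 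Your instinct to bound $\alpha$ by covering the construction with cliques is the right one, but the concrete realisation is the entire content of these two claims and is missing from your write-up.
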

\begin{proof}
Claim~\ref{uno} follows immediately from the W[1]-completeness of {\sc
multi-co\-lour clique}. This problem asks: given a graph~$G$, and a colouring
of~$G$ with~$k$ colours, does~$G$ contain a clique on~$k$ vertices? Given an
instance~$(G,k)$ of {\sc multi-colour clique}, we delete any edges both of
whose endpoints are the same colour. The resulting graph~$G'$ has $\chi \leq
k$. Our instance of {\sc multi-colour clique} is a yes-instance if and only
if~$G'$ has a clique on~$k$ vertices.
%%\end{proof}
%\begin{prop}\label{prop:alphaalpha} %We have the following results:
%%\begin{itemize}
%	%\item 
%	$\Pi^\alpha, \Pi^\tau$ and $\Pi^\nu$ are W[1]($\alpha$)-hard.
%	%\item $\omega$ is $\notin$XP($\alpha$).
%	%\item $\chi$ is $\notin$XP($\alpha$).
%%\end{itemize}
%\end{prop}
%\begin{proof}

For Claim~\ref{prop:alphaalpha}, let $G=(V,E)$ be a graph and
$\overline{G}=(V,\overline{E})$ its complement, (i.e. $e \in E \Leftrightarrow e
\notin \overline{E}$). Since $\alpha(G) = \omega(\overline{G})$ and $\omega(G)
= \alpha(\overline{G})$, the result for~$\Pi^\alpha$ and~$\Pi^\tau$ is an
immediate consequence of $\Pi^\omega \in$W[1]($\omega$)-hard.

We now prove the $\Pi^\nu$ case. Given a graph $G=(V,E)$, we define $G'$ to be
the product of $G$ with a single edge, i.e. $G'=(V',E')$, where $V' = V_1 \cup
V_2$ and $E' = \{ ((v,1),(v,2)) | v \in V\} \cup \{((v,i),(u,i)) | i \in
\{1,2\}, uv \in E\}$.

We claim that $\alpha(G')=\alpha(G), \nu(G')=2(|V| - \alpha(G))$.  It is
straightforward to verify that $\alpha(G')=\alpha(G)$. Moreover, for any graph
$G$ of order $n$, we have that $\frac{n-\nu(H)}{2} \leq \alpha(H)$, so $\nu(G')
\geq |V(G')| -2\alpha(G') = 2(|V| - 2\alpha(G))$. On the other hand, for a
stable set $S$ of $G$, $S \times \{1,2\}$ induces a forest (in fact a matching)
and consequently $\nu(G') \leq 2(|V| -\alpha(G))$. This completes the proof,
since $\Pi^\alpha$ is W[1]($\alpha$)-hard.

Let us now prove Claim~\ref{prop:gammaialpha}. Since $i \leq \alpha$, we need only prove that the problems are
W[2]-hard$(\alpha)$. We prove the result for the problem $\Pi^i$.
Given a graph $G=(V,E)$, and $k \in \{1,\ldots,|V|\}$, we
define $G_{k} = (V_{k},E_{k})$, where $V_{k} = V_1\cup \cdots \cup V_k$, $V_i =
V \times \{i\}$ and $E_{k} = E_1\cup \cdots \cup E_k  \cup E'$, where
$(V_i,E_i)$ induce cliques and $E' = \{((u,i),(v,j)), i,j \in \{1, \ldots ,k\},
i \neq j, v \in N_G[u]$\}.

The following claims then hold:
\begin{enumerate}
\item if $D_{k}$ is an independent dominating set of $G_{k}$ and $D_{k}
\cap V_j = \emptyset$ for some $j \in \{1,\ldots,k\}$ then $\{v| \exists i,
(v,i) \in
D_{k}\}$ is an independent dominating set of~$G$;
\item $\alpha(G_{k}) \leq k$;
\item if $k \geq i(G)$ then $i(G_{k})=i(G)$.
\end{enumerate}
The first two claims are obvious. In order to prove the third claim, note that if
$\{a_1,\ldots,a_{i(G)}\}$ is an independent dominating set in $G$, then
$\{(a_j,j), j=1,\ldots,i(G)\}$ is an independent dominating set for $G_k$.
Applying the first claim completes the proof.

Thus, given an oracle $\mathcal{O}$ for $(\Pi^i,\alpha)$, we find a minimum
independent dominating set for $G_1,G_2,\ldots$ until such a set has no vertex
in some $V_i$. (We can do this since the constructive and non-constructive
versions are equivalent, due to Proposition~\ref{prop:direct_red1}.) The
process will finish for $G_k, k\leq i(G)+1$ and since for $j \leq i(G)+1$, we have
that $\alpha(G_j) \leq j \leq i(G)+1$, we find that
$(\Pi^i,i)\leq_{FPT}(\Pi^i,\alpha)$.  Thus $\Pi^i$ is indeed
W[2]-hard($\alpha$). The corresponding result for $\Pi^\gamma$ follows
similarly.
\end{proof}

\section{Conclusion}

We have studied the cross-parameterization of \textsc{min vertex cover},
\textsc{max independent set}, \textsc{max clique}, \textsc{min coloring},
\textsc{min dominating set}, \textsc{min independent dominating set} and
\textsc{min feedback vertex set}. We are aware of the fact that most of the
parameters handled in this paper cannot determined in FPT time and that our
study is limited to only seven problems and parameters. However, our goal was
rather structural than purely algorithmic. We have tried to show that
cross-parameterization provides a somewhat deeper insight into the real nature of
the parameterized (in)tractability of the problems handled and helps us to
better comprehend it.

As one can see in Table~\ref{tab:results}, all of the problems tackled are FPT
with respect to both~$\tau$ (the standard parameter of \textsc{min vertex
cover}) and~$\nu$ (the standard parameter of \textsc{min feedback vertex set}).
There are however problems that are hard when parameterized by $\tau$ or $\nu$,
such as \textsc{list colouring}~\cite{listcolour-vc-hard}.
%FIXME: KD: I'm not sure if this last paper is the right reference. It states this as fact in the conclusion, but I didn't see a proof.

Finally, let us note that cross-parameterization can be applied to many
categories of combinatorial optimization problems, defined on several structures
(and not only on graphs). For instance, what is the parameterized complexity of
\textsc{min set cover} with respect to the standard parameter of \textsc{max
set packing} or to that of \textsc{min hitting set}? What are the complexities
of the two latter problems with respect to the two remaining parameters? 

\medskip

\noindent
\textbf{Acknowledgement.} The authors would like to thank Eunjung Kim, Bart Jansen and Florian Sikora for informing us of previous work in the area of structural parameterization.

\bibliographystyle{plain}
\bibliography{multibiblio}

\end{document}